\newtheorem{theorem}{Theorem}[section]
\newtheorem{proposition}[theorem]{Proposition}
\newtheorem{lemma}[theorem]{Lemma}
\newtheorem{remark}{Remark}
\newcommand{\be}{\begin{equation}}
\newcommand{\ee}{\end{equation}}
\newcommand{\bea}{\begin{eqnarray}}
\newcommand{\eea}{\end{eqnarray}}
\newcommand{\e}{{\rm e}}
\numberwithin{equation}{section}
\begin{document}

\title{Gaussian unitary ensemble with two jump discontinuities, PDEs and the coupled Painlev\'{e} II and IV systems}

\author{Shulin Lyu\thanks{School of Mathematics (Zhuhai), Sun Yat-sen University, Zhuhai 519082, China; e-mail: lvshulin1989@163.com} ~and Yang Chen\thanks{Department of Mathematics, Faculty of Science and Technology, University of Macau, Macau, China; e-mail: yangbrookchen@yahoo.co.uk}}


\date{\today}
\maketitle
\begin{abstract}
We consider the Hankel determinant generated by the Gaussian weight with two jump discontinuities. Utilizing the results of [C. Min and Y. Chen, Math. Meth. Appl. Sci. {\bf 42} (2019), 301--321] where a second order PDE was deduced for the log derivative of the Hankel determinant by using the ladder operators adapted to orthogonal polynomials, we derive the coupled Painlev\'{e} IV system which was established in [X. Wu and S. Xu, arXiv: 2002.11240v2] by a study of the Riemann-Hilbert problem for orthogonal polynomials. Under double scaling, we show that, as $n\rightarrow\infty$, the log derivative of the Hankel determinant in the scaled variables tends to the Hamiltonian of a coupled Painlev\'{e} II system and it satisfies a second order PDE. In addition, we obtain the asymptotics for the recurrence coefficients of orthogonal polynomials, which are connected with the solutions of the coupled Painlev\'{e} II system.

\end{abstract}

$\mathbf{Keywords}$: Gaussian unitary ensembles; Hankel determinant; Painlev\'{e} equations;

Orthogonal polynomials

$\mathbf{Mathematics\:\: Subject\:\: Classification\:\: 2020}$: 33E17; 34M55; 42C05

\section{Introduction}
The $n$-dimensional Gaussian unitary ensemble (GUE for short) is a set of $n\times n$ Hermitian random matrices whose eigenvalues have the following joint probability density function
\begin{align}\label{jpdf}
p(x_1,x_2,\cdots,x_n)=\frac{1}{C_n}\cdot\frac{1}{n!}\prod_{1\leq i<j\leq n}(x_i-x_j)^2\prod_{k=1}^n \e^{-x_k^2},
\end{align}
where $x_k\in(-\infty,\infty),k=1,2,\cdots,n$.
See \cite[sections 2.5, 2.6 and 3.3]{Mehta}. The normalization constant $n!C_n$, also known as the partition function, has the following explicit representation \cite[equation (17.6.7)]{Mehta}
\begin{align*}
n!C_n:=&\int_{(-\infty,\infty)^n} \prod_{1\leq i<j\leq n}(x_i-x_j)^2\prod_{k=1}^n \e^{-x_k^2}dx_k\\
=&(2\pi)^{n/2}2^{-n^2/2}\prod_{k=1}^n k!,
\end{align*}
namely,
\[C_n=(2\pi)^{n/2}2^{-n^2/2}\prod_{k=1}^{n-1} k!.\]

We consider the Hankel determinant generated by the moments of the Gaussian weight multiplied by a factor that has two jumps, i.e.
\begin{align*}
D_n(s_1,s_2):=\det\left(\int_{-\infty}^{\infty} x^{i+j}w(x;s_1,s_2)dx\right)_{i,j=0}^{n-1},
\end{align*}
where the weight function reads
\[
w(x;s_1,s_2):=\e^{-x^2}\left(A+B_1\theta(x-s_1)+B_2\theta(x-s_2)\right), \quad x\in(-\infty,\infty),
\]
with $s_1<s_2$ and $A\geq0,A+B_1\geq0,A+B_1+B_2\geq0,B_1B_2\neq0$.
Here $\theta(x)$ is 1 for $x>0$ and 0 otherwise.
For any interval $I\subset(-\infty,\infty)$, it is well known that (see \cite[sections 2.1 and 2.2]{Szego})
\[\frac{1}{n!}\int_{I^n}\prod_{1\leq i<j\leq n}(x_i-x_j)^2\prod_{k=1}^n \e^{-x_k^2}dx_k=\det\left(\int_I x^{i+j}\e^{-x^2}dx\right)_{i,j=0}^{n-1}.\]
Therefore, the probability that the interval $(s_1,s_2)$ has all or no eigenvalues of GUE is given by $D_n(s_1,s_2)/C_n$ with $A=0,B_1=1,B_2=-1$ and $A=1,B_1=-1,B_2=1$ respectively. The former was studied in \cite{BasorChenZhang} via the ladder operator approach \cite{Ismail}, a formalism adapted to monic orthogonal polynomials, and its log derivative was shown to satisfy a two-variable generalization of the Painlev\'{e} IV system.

By using the ladder operator formalism and with the aid of four auxiliary quantities, Min and one of the authors \cite{MinChen19} derived a second order partial differntial equation (PDE for short) satisfied by
\[\sigma_n(s_1,s_2):=\left(\frac{\partial}{\partial {s_1}}+\frac{\partial}{\partial{s_2}}\right)\ln D_n(s_1,s_2).\]
In a recent paper \cite{WuXu20}, Wu and Xu studied the special case of $D_n(s_1,s_2)$ where $A=1,B_1=\omega_1-1,B_2=\omega_2-\omega_1$ with $\omega_1,\omega_2\geq0$.
 Through the Riemann-Hilbert (RH for short) formalism of orthogonal polynomials \cite{FIK}, they showed that $\sigma_n(s_1,s_2)+n(s_1+s_2)$ is the Hamiltonian of a coupled Painlev\'{e} IV system. When $s_1$ and $s_2$ tend to the soft edge of the spectrum of GUE, by applying Deift-Zhou nonlinear steepest descent analysis \cite{DeiftZhou} to the RH problem (we call it RH method below), the asymptotic formulas for $D_n(s_1,s_2)$ and the associated orthogonal polynomials were deduced, which are expressed in terms of the solution of a coupled Painlev\'{e} II system.

Comparing the finite $n$ results of the above two papers concerning $D_n(s_1,s_2)$, we ask naturally whether they are compatible with each other.
To the best knowledge of the authors, it is not easy to obtain the second order PDE of \cite{MinChen19} from the coupled Painlev\'{e} IV system of \cite{WuXu20}. What about the other side? It transpires that the Hamiltonian of the coupled Painlev\'{e} IV system of \cite{WuXu20} can be derived by using the results of \cite{MinChen19}.
This is the main purpose of the present paper, which may provide new insights into the connection between the ladder operator approach and RH problems.

As we know, the ladder operator approach  and the RH method are both very effective tools in the study of unitary ensembles. The former is elementary in the sense that it uses the very basic theory of orthogonal polynomials and it provides a quite straightforward way to derive Painlev\'{e} transcendents for finite dimensional problems particularly those involving one variable, for example, the gap probability of Gaussian and Jacobi unitary ensembles on $(-a,a)$ with $a>0$ \cite{LyuChenFan18, MinChen18}, the partition function for weight functions with Fisher-Hartwig singularities \cite{ChenHan20, MinChen20} or with potential having a pole \cite{ChenIts10}. For problems involving two variables, a second order PDE is usually deduced \cite{ChenHaqMcKay13,LyuGriffinChen19}. The RH method is powerful for asymptotic analysis, for instance, the partition function and extreme eigenvalues for weight functions with the potential having poles \cite{ACM, BMM, DaiXuZhang18}, the correlation kernel \cite{DaiXuZhang19, XuDaiZhao14}, orthogonal polynomials \cite{ChenChenFan19}. For problems involving two or several variables, a coupled Painlev\'{e} system was usually established.

This paper is built up as follows. In the next section, we present some notations and results of \cite{MinChen19}. We make use of them in section 3 to show that the four auxiliary quantities allied with the orthogonal polynomials satisfy a coupled Painlev\'{e} IV system and $\sigma_n(s_1,s_2)+n(s_1+s_2)$ is the Hamiltonian of that system. Section 4 is devoted to the discussion of the double scaling limit of the Hankel determinant. By using the finite $n$ results given in section 2, we deduce that, as $n\rightarrow\infty$, the log derivative of the Hankel determinant in the scaled variables tends to the Hamiltonian of a coupled Painlev\'{e} II system and it satisfies a second order PDE. In addition, for the recurrence coefficients of the monic orthogonal polynomials associated with $w(x;s_1,s_2)$, we obtain their asymptotic expansions in large $n$ with the coefficients of the leading order term expressed in terms of the solutions of the coupled Painlev\'{e} II system.

\section{Notations and Some Results of \cite{MinChen19}}
In this section, we present some results of \cite{MinChen19} which will be used for our later derivation in subsequent sections.

Denote the Gaussian weight by $w_0(x)$, i.e.
\[w_0(x):=\e^{-{\rm v}_0(x)}, \qquad{\rm v}_0(x)=x^2.\]
Then the weight function of our interest reads
\begin{align*}
w(x;s_1,s_2)=w_0(x)\left(A+B_1\theta(x-s_1)+B_2\theta(x-s_2)\right).
\end{align*}
It is well known that the associated Hankel determinant admits the following representation (see \cite[pp.16-19]{Ismail})
\begin{align}
D_n(s_1,s_2)=&\det\left(\int_{-\infty}^{\infty} x^{i+j}w(x;s_1,s_2)dx\right)_{i,j=0}^{n-1}\nonumber\\
=&\prod_{j=0}^{n-1}h_j(s_1,s_2).
\end{align}
Here $h_j(s_1,s_2)$ is the square of the $L^2$-norm of the $j$th-degree monic polynomial orthogonal with respect to $w(x;s_1,s_2)$, namely,
\begin{align}\label{ordef}
h_j(s_1,s_2)\delta_{jk}:=\int_{-\infty}^{\infty}P_j(x;s_1,s_2)P_k(x;s_1,s_2)w(x;s_1,s_2)dx,
\end{align}
for $j,k=0,1,2,\cdots$, and \[P_j(x;s_1,s_2):=x^j+p(j,s_1,s_2)x^{j-1}+\cdots+P_j(0;s_1,s_2).\]

From the orthogonality, there follows the three term recurrence relation
\begin{align}\label{recu}
xP_n(x;s_1,s_2)=P_{n+1}(x;s_1,s_2)+\alpha_n(s_1,s_2)P_{n}(x;s_1,s_2)+\beta_n(s_1,s_2)P_{n-1}(x;s_1,s_2)
\end{align}
with $n\geq0$, subject to the initial conditions
\[P_0(x;s_1,s_2):=1,\qquad\qquad \beta_0(s_1,s_2)P_{-1}(x;s_1,s_2):=0.\]
The recurrence coefficients are given by
\begin{align}
\alpha_n(s_1,s_2)=&p(n,s_1,s_2)-p(n+1,s_1,s_2),\label{alp}\\
\beta_n(s_1,s_2)=&\frac{h_n(s_1,s_2)}{h_{n-1}(s_1,s_2)},
\end{align}
and it follows from \eqref{alp} that
\begin{align}\label{sumal}
\sum_{j=0}^{n-1}\alpha_j(s_1,s_2)=-p(n,s_1,s_2).
\end{align}
For ease of notations, in the following discussion, we shall not display the $s_1$ and $s_2$ dependence unless necessary.

The recurrence relation implies the Christoffel-Darboux formula
\[\sum_{j=0}^{n-1}\frac{P_j(x)P_j(y)}{h_i}=\frac{P_n(x)P_{n-1}(y)-P_{n-1}(x)P_n(y)}{h_{n-1}(x-y)}.\]
Here we point out that this identity and the recurrence relation hold for general monic polynomials orthogonal with respect to any given positive function which has moments of all orders. See for example \cite[section 3.2]{Szego} for more details.

With all the above identities, one can derive a pair of ladder operators adapted to $P_n(z)=P_n(z;s_1,s_2)$:
\begin{align*}
P_n'(z)=&\beta_nA_n(z)P_{n-1}(z)-B_n(z)P_n(z),\\
P_{n-1}'(z)=&\left(B_n(z)+{\rm v}_0'(z)\right)P_{n-1}(z)-A_{n-1}(z)P_n(z),
\end{align*}
where ${\rm v}_0(z)=z^2$, $A_n(z)$ and $B_n(z)$ have simple poles at $s_1$ and $s_2$, reading
\begin{align*}
A_n(z)=&\frac{R_{n,1}(s_1,s_2)}{z-s_1}+\frac{R_{n,2}(s_1,s_2)}{z-s_2}+2,\\
B_n(z)=&\frac{r_{n,1}(s_1,s_2)}{z-s_1}+\frac{r_{n,2}(s_1,s_2)}{z-s_2},
\end{align*}
with the residues defined by
\begin{align}
R_{n,i}(s_1,s_2):=&\frac{B_iP_n^2(s_i)\e^{-s_i^2}}{h_n},\\
r_{n,i}(s_1,s_2):=&\frac{B_iP_n(s_i)P_{n-1}(s_i)\e^{-s_i^2}}{h_{n-1}}.
\end{align}
Here $P_j(s_i)=P_j(x;s_1,s_2)|_{x=s_i}$ for $j=n-1,n$. Moreover, one can show that
$A_n(z)$ and $B_n(z)$ satisfy three compatibility conditions
\begin{align}
\left(B_{n+1}(z)+B_n(z)\right)=&\left(z-\alpha_n\right)A_n(z)-{\rm v}_0'(z),\tag{$S_1$}\\
1+\left(z-\alpha_n\right)\left(B_{n+1}(z)-B_n(z)\right)=&\beta_{n+1}A_{n+1}(z)-\beta_nA_{n-1}(z),\tag{$S_2$}\\
B_n^2(z)+{\rm v}_0'(z)B_n(z)+\sum_{j=0}^{n-1}A_j(z)=&\beta_nA_n(z)A_{n-1}(z),\tag{$S_2'$}
\end{align}
where $(S_2')$ results from $(S_1)$ and $(S_2)$. See \cite[Theorem 3.1]{MinChen19}. Concerning the discussion of ladder operators and their compatibility conditions for general weight functions with jumps, refer to Lemma 1, Remark 1 and Remark 2 of \cite{BasorChen09}.

Substituting $A_n(z)$ and $B_n(z)$ into $(S_1)$ and $(S_2')$, by equating the residues on their both sides, it was found that the recurrence coefficients can be expressed in terms of the auxiliary quantities which satisfy a system of difference equations (see \cite{MinChen19}, equations (3.7)-(3.14)). The results are presented below.
\begin{proposition}
\begin{enumerate}[{\rm (a)}]
\item $R_{n,i}$ and $r_{n,i}, i=1,2,$ satisfy the following system of difference equations:
\begin{align}
\beta_nR_{n,i}R_{n-1,i}=&r_{n,i}^2,\label{Rn-1}\\
r_{n+1,i}+r_{n,i}=&(s_i-\alpha_n)R_{n,i}.\label{rn+1}
\end{align}
\item The recurrence coefficients are expressed in terms of $R_{n,i}$ and $r_{n,i}$ $(i=1,2)$ by
\begin{align}
\alpha_n=&\frac{1}{2}\left(R_{n,1}+R_{n,2}\right),\label{alR}\\
\beta_n=&\frac{1}{2}\left(r_{n,1}+r_{n,2}+n\right).\label{betar}
\end{align}
\item The quantity $\sum_{j=0}^{n-1}\left(R_{j,1}+R_{j,2}\right)$ has the following representation
\begin{align}\label{sumR}
\sum_{j=0}^{n-1}\left(R_{j,1}+R_{j,2}\right)
=-2s_1r_{n,1}-2s_2r_{n,2}+2\beta_n\left(R_{n,1}+R_{n,2}+R_{n-1,1}+R_{n-1,2}\right).
\end{align}
\end{enumerate}
\end{proposition}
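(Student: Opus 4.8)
The plan is to substitute the explicit rational expressions for $A_n(z)$ and $B_n(z)$ into the compatibility conditions $(S_1)$ and $(S_2')$ and then read off the stated identities by matching, on the two sides, the principal parts at the simple poles $z=s_1$, $z=s_2$ on the one hand, and the first few Taylor coefficients of the expansion at $z=\infty$ on the other. Since $(S_2')$ already follows from $(S_1)$ and $(S_2)$, these two conditions suffice; $(S_2)$ can be used as an independent consistency check.

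\noindent\emph{Consequences of $(S_1)$.} On the left, $B_{n+1}(z)+B_n(z)$ has a simple pole at $s_i$ with residue $r_{n+1,i}+r_{n,i}$. On the right, ${\rm v}_0'(z)=2z$ is entire, so the pole of $(z-\alpha_n)A_n(z)-2z$ at $s_i$ comes only from $(z-\alpha_n)\frac{R_{n,i}}{z-s_i}$ and has residue $(s_i-\alpha_n)R_{n,i}$; equating residues gives \eqref{rn+1}. Letting $z\to\infty$, the left side tends to $0$, the two $z$-linear contributions on the right cancel, and each pole term tends to $R_{n,i}$, so the right side tends to $R_{n,1}+R_{n,2}-2\alpha_n$; setting this equal to $0$ gives \eqref{alR}.

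\noindent\emph{Consequences of $(S_2')$.} Near $z=s_i$, $B_n^2(z)$ has a double pole with leading coefficient $r_{n,i}^2$ and $\beta_nA_n(z)A_{n-1}(z)$ has a double pole with leading coefficient $\beta_nR_{n,i}R_{n-1,i}$, while ${\rm v}_0'(z)B_n(z)$ and $\sum_{j=0}^{n-1}A_j(z)$ are at worst simple there; matching the coefficient of $(z-s_i)^{-2}$ gives \eqref{Rn-1}, and together with \eqref{rn+1} this is the difference system of part (a). For part (b) it remains to expand $(S_2')$ at $z=\infty$: $B_n^2(z)=O(z^{-2})$, ${\rm v}_0'(z)B_n(z)\to 2(r_{n,1}+r_{n,2})$, $\sum_{j=0}^{n-1}A_j(z)\to 2n$, and $\beta_nA_n(z)A_{n-1}(z)\to 4\beta_n$, so matching constants yields $2(r_{n,1}+r_{n,2})+2n=4\beta_n$, which is \eqref{betar}.

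\noindent\emph{The identity \eqref{sumR} and the main point.} For part (c) I would match residues of $(S_2')$ at $s_1$ and at $s_2$. The delicate feature here is that the partial-fraction expansions of the quadratic terms $B_n^2(z)$ and $A_n(z)A_{n-1}(z)$ each contain a cross term supported on $\bigl((z-s_1)(z-s_2)\bigr)^{-1}$, whose residue at $s_i$ carries the antisymmetric factor $(s_i-s_j)^{-1}$. The residue of $(S_2')$ at $s_1$ reads
\begin{align*}
\frac{2r_{n,1}r_{n,2}}{s_1-s_2}+2s_1r_{n,1}+\sum_{j=0}^{n-1}R_{j,1}
&=\beta_n\left(\frac{R_{n,1}R_{n-1,2}+R_{n,2}R_{n-1,1}}{s_1-s_2}+2R_{n,1}+2R_{n-1,1}\right),
\end{align*}
and the residue at $s_2$ is obtained by interchanging the labels $1\leftrightarrow 2$. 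Adding the two relations, the antisymmetric cross terms cancel and one is left with
\begin{align*}
2s_1r_{n,1}+2s_2r_{n,2}+\sum_{j=0}^{n-1}\left(R_{j,1}+R_{j,2}\right)
&=2\beta_n\left(R_{n,1}+R_{n,2}+R_{n-1,1}+R_{n-1,2}\right),
\end{align*}
which rearranges to \eqref{sumR}. There is no conceptual obstacle in any of this: the argument is entirely a residue-and-asymptotics computation, and the only step requiring genuine care is the partial-fraction bookkeeping in $(S_2')$ together with the symmetrization over the two poles that produces \eqref{sumR} in its stated form.
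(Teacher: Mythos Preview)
Your proof is correct and follows precisely the approach the paper indicates (and that \cite{MinChen19} carries out): substitute the explicit forms of $A_n(z)$ and $B_n(z)$ into $(S_1)$ and $(S_2')$, then match residues at $z=s_i$ and coefficients at $z=\infty$. The symmetrization trick you use for part~(c), adding the two simple-pole relations so that the $(s_1-s_2)^{-1}$ cross terms cancel, is exactly the computation behind \eqref{sumR}.
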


By taking the derivatives of \eqref{ordef} with $j=k=n$ and $j=k+1=n$, the auxiliary quantities turn out to be the partial derivatives of $-\ln h_n(s_1,s_2)$ and $\ln p(n,s_1,s_2)$ with respect to $s_1$ and $s_2$. Refer to equations (3.15), (3.16), (3.19), (3.20) of \cite{MinChen19}. For ease of nations, in what follows, we denote $\frac{\partial}{\partial s_i}$ and $\frac{\partial^2}{\partial s_i\partial_{s_j}} (i,j=1,2)$ by $\partial_{s_i}$ and
$\partial_{s_is_j}^2$ respectively.

\begin{proposition}\label{Dhp}
The following differential relations hold
\begin{align}
\partial_{s_i}\ln h_n(s_1,s_2)=&-R_{n,i},\label{DhR}\\
\partial_{s_i}\,p(n,s_1,s_2)=&r_{n,i},\label{Dpr}
\end{align}
with $i=1,2$. In view of $\alpha_n=p(n,s_1,s_2)-p(n+1,s_1,s_2)$ and $\beta_n=h_n/h_{n-1}$, it follows that
\begin{align}
\partial_{s_i}\alpha_n(s_1,s_2)=&r_{n,i}-r_{n+1,i},\label{Dal}\\
\partial_{s_i}\beta_n(s_1,s_2)=&\beta_n\left(R_{n-1,i}-R_{n,i}\right).\label{Dbeta}
\end{align}
\end{proposition}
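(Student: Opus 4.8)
The statement to prove is Proposition \ref{Dhp}, which gives the differential relations
\begin{align*}
\partial_{s_i}\ln h_n &= -R_{n,i}, & \partial_{s_i}\,p(n,s_1,s_2) &= r_{n,i},
\end{align*}
together with the two consequences for $\alpha_n$ and $\beta_n$.

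The plan is to differentiate the orthogonality relations \eqref{ordef} directly with respect to $s_i$ and exploit the fact that $\partial_{s_i}w(x;s_1,s_2) = -B_i\,\delta(x-s_i)\,\e^{-x^2}$, since $\partial_{s_i}\theta(x-s_i) = -\delta(x-s_i)$. First I would take $j=k=n$ in \eqref{ordef}, so that $h_n = \int_{-\infty}^\infty P_n(x)^2 w(x)\,dx$. Differentiating, the term coming from $\partial_{s_i}P_n$ vanishes by orthogonality of $P_n$ against lower-degree polynomials (and $\partial_{s_i}P_n$ has degree $\le n-1$ since $P_n$ is monic), leaving only $\partial_{s_i}h_n = \int P_n(x)^2\,\partial_{s_i}w(x)\,dx = -B_i P_n(s_i)^2 \e^{-s_i^2}$. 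Dividing by $h_n$ and recalling the definition $R_{n,i} = B_i P_n(s_i)^2\e^{-s_i^2}/h_n$ gives \eqref{DhR}.

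Next, for \eqref{Dpr}, I would differentiate the relation obtained from \eqref{ordef} with $j=n$, $k=n-1$, namely $0 = \int P_n(x) P_{n-1}(x) w(x)\,dx$. Writing $P_n = x^n + p(n,s_1,s_2)x^{n-1} + \cdots$, one has the standard identity $\int P_n(x)P_{n-1}(x) w(x)\,dx / h_{n-1}$ picks out the coefficient relation; more precisely, differentiating $0 = \int P_n P_{n-1} w\,dx$ and using orthogonality to kill the $\partial_{s_i}P_{n-1}$ contribution and all but the leading-coefficient part of the $\partial_{s_i}P_n$ contribution. Since $\partial_{s_i}P_n$ has degree $\le n-1$ with leading term $\big(\partial_{s_i}p(n,s_1,s_2)\big)x^{n-1}$, its inner product against $P_{n-1}w$ equals $\big(\partial_{s_i}p(n,s_1,s_2)\big)h_{n-1}$. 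The remaining term is $\int P_n P_{n-1}\,\partial_{s_i}w\,dx = -B_i P_n(s_i)P_{n-1}(s_i)\e^{-s_i^2}$. Setting the sum to zero and dividing by $h_{n-1}$ yields $\partial_{s_i}p(n,s_1,s_2) = B_i P_n(s_i)P_{n-1}(s_i)\e^{-s_i^2}/h_{n-1} = r_{n,i}$, which is \eqref{Dpr}.

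Finally, the relations \eqref{Dal} and \eqref{Dbeta} follow by routine differentiation of the already-established identities $\alpha_n = p(n,s_1,s_2) - p(n+1,s_1,s_2)$ and $\beta_n = h_n/h_{n-1}$: applying \eqref{Dpr} gives $\partial_{s_i}\alpha_n = r_{n,i} - r_{n+1,i}$, and applying \eqref{DhR} gives $\partial_{s_i}\beta_n = \beta_n\big(\partial_{s_i}\ln h_n - \partial_{s_i}\ln h_{n-1}\big) = \beta_n(R_{n-1,i} - R_{n,i})$. The only point requiring care—and the main (mild) obstacle—is the justification of differentiating under the integral sign and the correct handling of the distributional derivative $\partial_{s_i}\theta(x-s_i) = -\delta(x-s_i)$; one should note that the weight has a jump at $s_i$ but the integrand $P_n(x)^2\e^{-x^2}$ (resp.\ $P_n(x)P_{n-1}(x)\e^{-x^2}$) is smooth there, so the evaluation at $x=s_i$ is unambiguous, and the Gaussian decay makes all the integrals and their $s_i$-derivatives absolutely convergent.
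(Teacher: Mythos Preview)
Your proposal is correct and follows exactly the approach indicated in the paper: differentiate the orthogonality relation \eqref{ordef} with $j=k=n$ to obtain \eqref{DhR}, and with $j=n,\ k=n-1$ to obtain \eqref{Dpr}, then derive \eqref{Dal} and \eqref{Dbeta} by straightforward differentiation of $\alpha_n=p(n,s_1,s_2)-p(n+1,s_1,s_2)$ and $\beta_n=h_n/h_{n-1}$. The paper does not write out a proof but simply refers to equations (3.15), (3.16), (3.19), (3.20) of \cite{MinChen19}, whose derivation is precisely what you have reproduced.
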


Define
\[\sigma_n(s_1,s_2):=\left(\partial_{s_1}+\partial_{s_2}\right)\ln D_n(s_1,s_2).\]
With the fact that $D_n(s_1,s_2)=\prod_{j=0}^{n-1}h_j(s_1,s_2)$ and by using \eqref{alR}, one finds
\begin{align}\label{sigmaR}
\sigma_n(s_1,s_2)=-\sum_{j=0}^{n-1}\left(R_{j,1}+R_{j,2}\right).
\end{align}
According to \eqref{sumal} and \eqref{DhR}, there follows
\begin{align}\label{sigmap}
\sigma_n(s_1,s_2)=2p(n,s_1,s_2),
\end{align}
so that, in light of \eqref{Dpr},
\begin{align*}
\partial_{s_i}\sigma_n=2r_{n,i},\qquad i=1,2.
\end{align*}
Hence, the compatibility condition $\partial_{s_1s_2}^2\sigma_n=\partial_{s_2s_1}^2\sigma_n$ gives us
\begin{align}\label{r1r2}
\partial_{s_2}r_{n,1}=\partial_{s_1}r_{n,2}.
\end{align}
Combining \eqref{sumR} with \eqref{sigmaR}, and taking account of \eqref{Rn-1} and \eqref{betar}, we obtain the expression of $\sigma_n(s_1,s_2)$ in terms of the auxiliary quantities
\begin{align}\label{sigmaRr}
\sigma_n=2\left(s_1r_{n,1}+s_2r_{n,2}-\frac{r_{n,1}^2}{R_{n,1}}-\frac{r_{n,2}^2}{R_{n,2}}\right)
-\left(r_{n,1}+r_{n,2}+n\right)\left(R_{n,1}+R_{n,2}\right).
\end{align}

By using the above identities, a second order PDE was established for $\sigma_n(s_1,s_2)$ (see Theorem 3.3, \cite{MinChen19}).
\begin{proposition}\label{sigmaPDE}
$\sigma_n(s_1,s_2)$ satisfies the following equation
\[\left(\left(2s_1\cdot\partial_{s_1}\sigma_n+2s_2\cdot\partial_{s_2}\sigma_n-2\sigma_n\right)^2-\Delta_1-\Delta_2\right)^2=4\Delta_1\Delta_2,\]
where $\Delta_1$ and $\Delta_2$ are defined by
\begin{align*}
\Delta_1:=&\left(\partial_{s_1s_1}^2\sigma_n+\partial_{s_2s_1}^2\sigma_n\right)^2+4\left(\partial_{s_1}\sigma_n\right)^2\left(\partial_{s_1}\sigma_n+\partial_{s_2}\sigma_n+2n\right),\\
\Delta_2:=&\left(\partial_{s_2s_2}^2\sigma_n+\partial_{s_1s_2}^2\sigma_n\right)^2+4\left(\partial_{s_2}\sigma_n\right)^2\left(\partial_{s_1}\sigma_n+\partial_{s_2}\sigma_n+2n\right).
\end{align*}
\end{proposition}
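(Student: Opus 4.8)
\emph{Proof strategy.} The plan is to insert the closed form \eqref{sigmaRr} of $\sigma_n$ into both sides of the asserted PDE and reduce everything to the auxiliary quantities $R_{n,1},R_{n,2},r_{n,1},r_{n,2}$ and $\beta_n$; in these variables the identity becomes transparent because $\Delta_1$ and $\Delta_2$ turn out to be perfect squares and the left-hand side factors accordingly.

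First I would rewrite the derivatives entering $\Delta_1$. Since $\partial_{s_i}\sigma_n=2r_{n,i}$, commutativity of mixed partials gives $\partial_{s_1s_1}^2\sigma_n+\partial_{s_2s_1}^2\sigma_n=\partial_{s_1}\!\left(\partial_{s_1}\sigma_n+\partial_{s_2}\sigma_n\right)=2\partial_{s_1}(r_{n,1}+r_{n,2})$; by \eqref{betar} we have $r_{n,1}+r_{n,2}=2\beta_n-n$, so \eqref{Dbeta} yields $\partial_{s_1s_1}^2\sigma_n+\partial_{s_2s_1}^2\sigma_n=4\beta_n(R_{n-1,1}-R_{n,1})$. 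Moreover $\partial_{s_1}\sigma_n+\partial_{s_2}\sigma_n+2n=2(r_{n,1}+r_{n,2}+n)=4\beta_n$ and $(\partial_{s_1}\sigma_n)^2=4r_{n,1}^2$, whence $\Delta_1=16\beta_n^2(R_{n-1,1}-R_{n,1})^2+64\beta_nr_{n,1}^2$. Now \eqref{Rn-1} in the form $\beta_nR_{n-1,1}=r_{n,1}^2/R_{n,1}$ turns $\beta_n(R_{n-1,1}-R_{n,1})$ into $r_{n,1}^2/R_{n,1}-\beta_nR_{n,1}$, and completing the square gives
\[\Delta_1=16\left(\frac{r_{n,1}^2}{R_{n,1}}+\beta_nR_{n,1}\right)^{2},\qquad \Delta_2=16\left(\frac{r_{n,2}^2}{R_{n,2}}+\beta_nR_{n,2}\right)^{2},\]
the second formula following by the same computation with the index $1$ replaced by $2$.

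Next I would treat the left-hand side. Using $\partial_{s_i}\sigma_n=2r_{n,i}$, the combination $2s_1\partial_{s_1}\sigma_n+2s_2\partial_{s_2}\sigma_n-2\sigma_n$ equals $4s_1r_{n,1}+4s_2r_{n,2}-2\sigma_n$; substituting \eqref{sigmaRr} and replacing $r_{n,1}+r_{n,2}+n$ by $2\beta_n$ via \eqref{betar}, all the terms proportional to $s_ir_{n,i}$ cancel and one is left with
\[2s_1\partial_{s_1}\sigma_n+2s_2\partial_{s_2}\sigma_n-2\sigma_n=4\left(\frac{r_{n,1}^2}{R_{n,1}}+\beta_nR_{n,1}\right)+4\left(\frac{r_{n,2}^2}{R_{n,2}}+\beta_nR_{n,2}\right).\]
Setting $u_i:=r_{n,i}^2/R_{n,i}+\beta_nR_{n,i}$ we have shown $\Delta_i=16u_i^2$ and $2s_1\partial_{s_1}\sigma_n+2s_2\partial_{s_2}\sigma_n-2\sigma_n=4(u_1+u_2)$, so the asserted PDE reduces to $\big((4u_1+4u_2)^2-16u_1^2-16u_2^2\big)^2=4\cdot16u_1^2\cdot16u_2^2$, i.e. $(32u_1u_2)^2=1024\,u_1^2u_2^2$, which is an identity. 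Since $\sigma_n$ is real analytic in $s_1,s_2$ and the above parametrization is valid on the open dense set where $R_{n,1}R_{n,2}\neq0$, this would complete the proof.

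I do not expect a serious obstacle. The only steps requiring a little foresight are (i) rewriting $\partial_{s_1}(r_{n,1}+r_{n,2})$ through $\beta_n$ by means of \eqref{betar} and \eqref{Dbeta}, which is what converts the second-order derivatives into purely algebraic expressions in the auxiliary quantities, and (ii) noticing that \eqref{Rn-1} is precisely the relation making each $\Delta_i$ a perfect square. The rest is bookkeeping, and since the reductions above are exact equalities rather than equalities up to sign, no spurious branch of a square root is introduced.
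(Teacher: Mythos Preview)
Your argument is correct. The paper itself does not supply a proof of this proposition---it is quoted from \cite{MinChen19} with only the remark that it follows ``by using the above identities''---and your derivation does precisely that, expressing the derivatives of $\sigma_n$ through the auxiliary quantities via $\partial_{s_i}\sigma_n=2r_{n,i}$, \eqref{betar}, \eqref{Dbeta}, \eqref{Rn-1} and \eqref{sigmaRr}, then recognizing each $\Delta_i$ as the perfect square $16\bigl(r_{n,i}^2/R_{n,i}+\beta_nR_{n,i}\bigr)^2$ so that the asserted PDE becomes an algebraic identity.
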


\section{PDEs satisfied by $R_{n,i}$ and Coupled Painlev\'{e} IV system}
Based on the results presented in the previous section, we will derive a coupled PDEs satisfied by $R_{n,1}$ and $R_{n,2}$ in this section, which we will see in the next section are crucial for the derivation of the coupled Painlev\'{e} II system under double scaling. We will also deduce the coupled Painlev\'{e} IV system satisfied by quantities allied with $R_{n,i}$ and $r_{n,i}$.

\subsection{Analogs of Riccati equations for $R_{n,i}$ and $r_{n,i}$, and coupled PDEs satisfied by $R_{n,i}$}

Combining the expressions involving the recurrence coefficients together, namely \eqref{alR}, \eqref{betar}, \eqref{Dal} and \eqref{Dbeta}, with the aid of the difference equations \eqref{Rn-1} and \eqref{rn+1}, we arrive at the following four first order partial differential equations for $R_{n,i}$ and $r_{n,i}$.
\begin{lemma}
The quantities $R_{n,i}$ and $r_{n,i}, i=1,2,$ satisfy the analogs of Riccati equations
\begin{align}
\partial_{s_i}\left(R_{n,1}+R_{n,2}\right)=&4r_{n,i}+(R_{n,1}+R_{n,2}-2s_i)R_{n,i},\label{Ri1}\\
\partial_{s_i}\left(r_{n,1}+r_{n,2}\right)=&\frac{2r_{n,i}^2}{R_{n,i}}-\left(n+r_{n,1}+r_{n,2}\right)R_{n,i}.\label{Ri2}
\end{align}
\end{lemma}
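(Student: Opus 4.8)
The plan is to derive the four equations \eqref{Ri1}--\eqref{Ri2} purely algebraically from the finite-$n$ identities collected in Section~2, without any further analysis. The key inputs are the difference equations \eqref{Rn-1}, \eqref{rn+1}, the expressions \eqref{alR}, \eqref{betar} of the recurrence coefficients in terms of the auxiliary quantities, and the differential relations \eqref{Dal}, \eqref{Dbeta} for $\partial_{s_i}\alpha_n$ and $\partial_{s_i}\beta_n$.

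First I would establish \eqref{Ri1}. By \eqref{alR}, $R_{n,1}+R_{n,2}=2\alpha_n$, so $\partial_{s_i}(R_{n,1}+R_{n,2})=2\,\partial_{s_i}\alpha_n=2(r_{n,i}-r_{n+1,i})$ using \eqref{Dal}. Now I eliminate $r_{n+1,i}$ via the difference equation \eqref{rn+1}, which gives $r_{n+1,i}=(s_i-\alpha_n)R_{n,i}-r_{n,i}$. Substituting and replacing $\alpha_n$ by $\tfrac12(R_{n,1}+R_{n,2})$ yields
\begin{align*}
\partial_{s_i}(R_{n,1}+R_{n,2})=2\bigl(r_{n,i}-(s_i-\alpha_n)R_{n,i}+r_{n,i}\bigr)=4r_{n,i}+(R_{n,1}+R_{n,2}-2s_i)R_{n,i},
\end{align*}
which is exactly \eqref{Ri1}.

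Next I would establish \eqref{Ri2}. By \eqref{betar}, $r_{n,1}+r_{n,2}+n=2\beta_n$, hence $\partial_{s_i}(r_{n,1}+r_{n,2})=2\,\partial_{s_i}\beta_n=2\beta_n(R_{n-1,i}-R_{n,i})$ using \eqref{Dbeta}. The term $2\beta_n R_{n-1,i}$ must be rewritten: from \eqref{Rn-1}, $\beta_n R_{n,i}R_{n-1,i}=r_{n,i}^2$, so $\beta_n R_{n-1,i}=r_{n,i}^2/R_{n,i}$ (valid where $R_{n,i}\neq0$). Thus $\partial_{s_i}(r_{n,1}+r_{n,2})=2r_{n,i}^2/R_{n,i}-2\beta_n R_{n,i}$, and substituting $2\beta_n=n+r_{n,1}+r_{n,2}$ gives \eqref{Ri2}.

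The only subtlety, and the mild obstacle worth noting, is bookkeeping of the shifted index: \eqref{Dal} brings in $r_{n+1,i}$ and \eqref{Dbeta} brings in $R_{n-1,i}$, and the whole point is that \eqref{rn+1} and \eqref{Rn-1} let one trade these shifted quantities for the level-$n$ auxiliary quantities $R_{n,i},r_{n,i}$ and the already-known combinations $\alpha_n,\beta_n$; once that substitution is made the identities are immediate. One should also record the standing assumption $R_{n,i}\neq0$ implicit in the division, which holds generically since $R_{n,i}=B_iP_n^2(s_i)\e^{-s_i^2}/h_n$ with $B_1B_2\neq0$.
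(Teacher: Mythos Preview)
Your proof is correct and follows essentially the same route as the paper: use \eqref{Dal} and \eqref{Dbeta}, eliminate the shifted quantities $r_{n+1,i}$ and $R_{n-1,i}$ via the difference relations \eqref{rn+1} and \eqref{Rn-1}, and then replace $\alpha_n$ and $\beta_n$ by \eqref{alR} and \eqref{betar}. Your added remark on the standing assumption $R_{n,i}\neq0$ is a welcome clarification that the paper leaves implicit.
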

\begin{proof}
Removing $r_{n+1,i}$ from \eqref{Dal} by using \eqref{rn+1}, we get
\[\partial_{s_i}\alpha_n(s_1,s_2)=2r_{n,i}+\left(\alpha_n-s_i\right)R_{n,i}.\]
Inserting \eqref{alR} into the above equation, we obtain \eqref{Ri1}.

Getting rid of $R_{n-1,i}$ in \eqref{Dbeta} by using \eqref{Rn-1}, we find
\[\partial_{s_i}\beta_n=\frac{r_{n,i}^2}{R_{n,i}}-\beta_nR_{n,i}.
\]
Plugging \eqref{betar} into this identity, we come to \eqref{Ri2}.
\end{proof}

From \eqref{Ri1}, we readily get the expressions of $r_{n,i}$ in terms of $R_{n,i}$ and their first order partial derivatives. Substituting them into \eqref{Ri2}, we arrive at a coupled PDEs satisfied by $R_{n,i}$.

\begin{theorem}
The quantities $R_{n,i}, i=1,2,$ satisfy the following coupled PDEs:
\begin{subequations}\label{RPDE}
\begin{equation}\label{RPDE1}
\begin{aligned}
\left(\partial_{s_1s_1}^2+\partial_{s_1s_2}^2\right)\left(R_{n,1}+R_{n,2}\right)-\partial_{s_1}\left(R_{n,1}+R_{n,2}\right)\cdot\left(\frac{\partial_{s_1}(R_{n,1}+R_{n,2})}{2R_{n,1}}+R_{n,2}\right)+2(s_2-s_1)\left(\partial_{s_1}R_{n,2}\right)\\
+R_{n,1}\left(\partial_{s_2}(R_{n,1}+R_{n,2})-\frac{3}{2}(R_{n,1}+R_{n,2})^2+2\left(2s_1R_{n,1}+(s_1+s_2)R_{n,2}-s_1^2+2n+1\right)\right)=0,
\end{aligned}
\end{equation}
and
\begin{equation}\label{RPDE2}
\begin{aligned}
\left(\partial_{s_2s_2}^2+\partial_{s_2s_1}^2\right)\left(R_{n,1}+R_{n,2}\right)-\partial_{s_2}\left(R_{n,1}+R_{n,2}\right)\cdot\left(\frac{\partial_{s_2}(R_{n,1}+R_{n,2})}{2R_{n,2}}+R_{n,1}\right)
+2(s_1-s_2)\left(\partial_{s_2}R_{n,1}\right)\\
+R_{n,2}\left(\partial_{s_1}(R_{n,1}+R_{n,2})-\frac{3}{2}(R_{n,1}+R_{n,2})^2+2\left((s_1+s_2)R_{n,1}+2s_2R_{n,2}-s_2^2+2n+1\right)\right)=0.
\end{aligned}
\end{equation}
\end{subequations}
\end{theorem}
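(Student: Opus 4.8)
The plan is to treat \eqref{Ri1} as a pair of \emph{linear} equations in the unknowns $r_{n,1},r_{n,2}$, solve them, and feed the result into the remaining pair \eqref{Ri2}. Explicitly, \eqref{Ri1} gives
\[
r_{n,i}=\frac14\left(\partial_{s_i}\left(R_{n,1}+R_{n,2}\right)-\left(R_{n,1}+R_{n,2}-2s_i\right)R_{n,i}\right),\qquad i=1,2,
\]
and adding the two instances produces an expression for $r_{n,1}+r_{n,2}$ in terms of $R_{n,1},R_{n,2}$ and their first-order partial derivatives. Substituting these into the $i=1$ and $i=2$ instances of \eqref{Ri2} eliminates $r_{n,1},r_{n,2}$ entirely and leaves a relation among $R_{n,1},R_{n,2}$ and their partial derivatives up to second order.

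The next step is to expand the left-hand side $\partial_{s_1}(r_{n,1}+r_{n,2})$ by the product rule. The essential observation is that $\partial_{s_1}r_{n,1}$ supplies the term $\tfrac14\,\partial^2_{s_1s_1}(R_{n,1}+R_{n,2})$ while $\partial_{s_1}r_{n,2}$ supplies $\tfrac14\,\partial^2_{s_1s_2}(R_{n,1}+R_{n,2})$, so that after multiplying the whole equation by $4$ they assemble into the second-order block $\bigl(\partial^2_{s_1s_1}+\partial^2_{s_1s_2}\bigr)(R_{n,1}+R_{n,2})$ appearing in \eqref{RPDE1}. On the right-hand side, $2r_{n,1}^2/R_{n,1}$ expands into a piece $\tfrac{1}{8R_{n,1}}\bigl[\partial_{s_1}(R_{n,1}+R_{n,2})\bigr]^2$ (the origin of the factor $\frac{\partial_{s_1}(R_{n,1}+R_{n,2})}{2R_{n,1}}$ in the statement), a piece linear in $\partial_{s_1}(R_{n,1}+R_{n,2})$, and a purely algebraic piece; the term $(n+r_{n,1}+r_{n,2})R_{n,1}$ is handled the same way. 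Collecting everything and regrouping so that the common factor $R_{n,1}$ is pulled out of the non-derivative terms yields \eqref{RPDE1}.

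Equation \eqref{RPDE2} then requires no separate computation. The construction leading to \eqref{Ri1}--\eqref{Ri2} is invariant under the involution that simultaneously interchanges the residue indices $1\leftrightarrow2$ and the variables $s_1\leftrightarrow s_2$; applying this involution to \eqref{RPDE1} produces \eqref{RPDE2} term by term.

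The one genuinely delicate point is the bookkeeping in the regrouping step. One must keep track of the cross-term $2(s_2-s_1)\,\partial_{s_1}R_{n,2}$, which is produced by the mismatch between the shift $-2s_1$ inside $r_{n,1}$ and the shift $-2s_2$ inside $r_{n,2}$ when the latter is differentiated with respect to $s_1$, and one must verify that the surviving algebraic terms collapse exactly to $-\tfrac32(R_{n,1}+R_{n,2})^2+2\bigl(2s_1R_{n,1}+(s_1+s_2)R_{n,2}-s_1^2+2n+1\bigr)$ inside the $R_{n,1}$-bracket. It is convenient to record $r_{n,1}+r_{n,2}$ once and, as an internal check (not logically needed), to confirm that the resulting expression is consistent with the compatibility relation \eqref{r1r2}. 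No idea beyond the preceding Lemma and the identities collected in Section~2 enters the argument.
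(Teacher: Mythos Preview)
Your proposal is correct and follows essentially the same approach as the paper: solve \eqref{Ri1} for $r_{n,i}$ in terms of $R_{n,i}$ and their first derivatives, substitute into \eqref{Ri2}, and use the $s_1\leftrightarrow s_2$, $R_{n,1}\leftrightarrow R_{n,2}$ symmetry (the paper records this as a Remark) to obtain \eqref{RPDE2} from \eqref{RPDE1}. The paper states this in a single sentence and omits the algebra; your write-up supplies the bookkeeping details the paper leaves implicit.
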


\begin{remark}
Interchanging $s_1$ with $s_2$, $R_{n,1}$ with $R_{n,2}$ in \eqref{RPDE1}, we get \eqref{RPDE2}. This observation agrees with the symmetry in position of $s_1$ and $s_2$ in the weight function $w(x;s_1,s_2)$ and the definitions of $R_{n,i}$.
\end{remark}

\begin{remark}
If $B_2=0$, then $R_{n,2}=0$ and $R_{n,1}$ depends only on $s_1$. Equation \eqref{RPDE1} is reduced to an ordinary differential equation satisfied by $R_{n}(s_1):=R_{n,1}(s_1,0)$
\begin{align}\label{RODE}
R_{n}''=\frac{(R_{n}')^2}{2R_n}+\frac{3}{2}R_n^3-4s_1R_n^2+2(s_1^2-2n-1)R_n,
\end{align}
which is identical with (2.37) of \cite{MinChen19} where $t_1$ is used instead of $s_1$. As was pointed out there, \eqref{RODE} can be transformed into a Painlev\'{e} IV equation satisfied by $y(s_1):=R_n(-s_1)$.

In case $B_1=0$, via a similar argument, we find that $R_{n,2}(0,s_2)$ satisfies \eqref{RODE} with $s_1$ replaced by $s_2$.
\end{remark}

\subsection{Coupled Painlev\'{e} IV system}
Define
\[
x:=\frac{s_1+s_2}{2},\qquad\qquad s:=\frac{s_2-s_1}{2},\]
and introduce four quantities allied with $R_{n,i}(s_1,s_2)$ and $r_{n,i}(s_1,s_2)$:
\begin{align*}
a_i(x,s):=&\frac{r_{n,i}^2}{R_{n,i}(r_{n,1}+r_{n,2}+n)},\\
b_i(x,s):=&\frac{R_{n,i}}{r_{n,i}}(r_{n,1}+r_{n,2}+n),
\end{align*}
with $i=1,2$. We have
\begin{align*}
s_1=x-s,\qquad\qquad s_2=x+s,
\end{align*}
and
\begin{equation}\label{rRab}
\begin{aligned}
R_{n,i}(s_1,s_2)=&\frac{a_ib_i^2}{a_1b_1+a_2b_2+n},\\
r_{n,i}(s_1,s_2)=&a_ib_i.
\end{aligned}
\end{equation}

By making use of the results from section 2, we show that $a_i$ and $b_i$ satisfy a coupled Painlev\'{e} IV system with $\left(\partial_{s_1}+\partial_{s_2}\right)\ln D_n(s_1,s_2)+n(s_1+s_2)$ being the Hamiltonian.
\begin{theorem}\label{cPiv}
The quantity \[H_{IV}(a_1,a_2,b_1,b_2;x,s):=\sigma_n(s_1,s_2)+n(s_1+s_2)\]
with $\sigma_n(s_1,s_2)=\left(\partial_{s_1}+\partial_{s_2}\right)\ln D_n(s_1,s_2)$ satisfying the second order PDE given by Proposition \ref{sigmaPDE}, is expressed in terms of $a_i(x,s)$ and $b_i(x,s)$ by
\begin{equation}\label{Hab}
\begin{aligned}
H_{IV}(a_1,a_2,b_1,b_2;x,s)=&-2(a_1b_1+a_2b_2+n)(a_1+a_2)-(a_1b_1^2+a_2b_2^2)\\
&+2\left((x-s)a_1b_1+(x+s)a_2b_2+nx\right),
\end{aligned}
\end{equation}
and it is the Hamiltonian of the following coupled Painlev\'{e} IV system
\begin{equation}\label{cPiveq}
\begin{aligned}
\partial_x a_1=\partial_{b_1}H_{IV}=&-2a_1(a_1+a_2+b_1-x+s),\\
\partial_x a_2=\partial_{b_2}H_{IV}=&-2a_2(a_1+a_2+b_2-x-s),\\
\partial_x b_1=\partial_{a_1}H_{IV}=&b_1^2+2b_1(2a_1+a_2-x+s)+2(a_2b_2+n),\\
\partial_x b_2=\partial_{a_2}H_{IV}=&b_2^2+2b_2(a_1+2a_2-x-s)+2(a_1b_1+n).
\end{aligned}
\end{equation}
\end{theorem}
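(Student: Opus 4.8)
The plan is to prove the two genuinely new assertions of the theorem, namely the closed form \eqref{Hab} for $H_{IV}$ and the fact that the system \eqref{cPiveq} is generated by $H_{IV}$; that $\sigma_n$ obeys the PDE of Proposition~\ref{sigmaPDE} is nothing but that proposition restated. For \eqref{Hab} I would substitute the change of variables into \eqref{sigmaRr}. From the definitions of $a_i,b_i$ one reads off $r_{n,i}=a_ib_i$, $N:=r_{n,1}+r_{n,2}+n=a_1b_1+a_2b_2+n$, $r_{n,i}^2/R_{n,i}=a_iN$, and, via \eqref{rRab}, $R_{n,1}+R_{n,2}=(a_1b_1^2+a_2b_2^2)/N$; inserting these together with $s_1=x-s$, $s_2=x+s$ and $n(s_1+s_2)=2nx$ into $\sigma_n+n(s_1+s_2)$ collapses \eqref{sigmaRr} to \eqref{Hab} after the factors $N$ cancel. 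Differentiating \eqref{Hab} with respect to $a_i$ and $b_i$ (at fixed $x,s$) reproduces the right-hand sides appearing in \eqref{cPiveq}, so the remaining task is to show that the $x$-derivatives of $a_i,b_i$ equal those right-hand sides.

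The key point for that is $\partial_x=\partial_{s_1}+\partial_{s_2}$ combined with two cross-derivative identities. Proposition~\ref{Dhp} gives $\partial_{s_i}\ln h_n=-R_{n,i}$, so equality of mixed second partials yields $\partial_{s_1}R_{n,2}=\partial_{s_2}R_{n,1}$; likewise \eqref{r1r2} gives $\partial_{s_1}r_{n,2}=\partial_{s_2}r_{n,1}$. Hence $\partial_xR_{n,i}=\partial_{s_i}(R_{n,1}+R_{n,2})$ and $\partial_xr_{n,i}=\partial_{s_i}(r_{n,1}+r_{n,2})$, so the analogs of Riccati equations \eqref{Ri1}, \eqref{Ri2} turn into explicit first-order evolution equations in $x$ for the \emph{individual} residues $R_{n,i}$ and $r_{n,i}$. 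Since \eqref{Ri1}, \eqref{Ri2} and the two cross-relations hold with $n$ replaced by $n-1$ as well, the same applies to $R_{n-1,i}$ and $r_{n-1,i}$.

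It then remains to translate between the two sets of variables. From \eqref{Rn-1} and \eqref{betar} one obtains the dictionary $a_i=\tfrac12R_{n-1,i}$ and $b_i=2r_{n,i}/R_{n-1,i}$ (so $a_ib_i=r_{n,i}$), whence $a_1+a_2=\alpha_{n-1}$ by \eqref{alR} and $N=2\beta_n$. Writing $\partial_xa_i=\tfrac12\partial_xR_{n-1,i}=2r_{n-1,i}+(\alpha_{n-1}-s_i)R_{n-1,i}$ and eliminating $r_{n-1,i}$ through \eqref{rn+1} at level $n-1$, i.e. $r_{n-1,i}=(s_i-\alpha_{n-1})R_{n-1,i}-r_{n,i}$, then converting back with the dictionary yields the first two equations of \eqref{cPiveq}; similarly $\partial_xb_i=\partial_x(r_{n,i}/a_i)$, with $\partial_xr_{n,i}$ supplied by \eqref{Ri2} and the ratios $r_{n,i}^2/R_{n,i}$ rewritten through \eqref{Rn-1}, gives the last two. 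A direct comparison with the partial derivatives of \eqref{Hab} then confirms that $H_{IV}$ is the Hamiltonian of the system.

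I expect the conceptual crux to be the decoupling of the Riccati-type relations: it is precisely the two cross-derivative identities $\partial_{s_1}R_{n,2}=\partial_{s_2}R_{n,1}$ and $\partial_{s_1}r_{n,2}=\partial_{s_2}r_{n,1}$ that allow the symmetric sum equations \eqref{Ri1}, \eqref{Ri2} to be read as evolution equations in $x$ for the individual residues; without them one controls only the sums $R_{n,1}+R_{n,2}$ and $r_{n,1}+r_{n,2}$. The remaining work is bookkeeping, the main care being to keep \eqref{Rn-1} straight at the two index levels $n$ and $n-1$ when passing between $(R_{n,i},r_{n,i})$ and $(a_i,b_i)$.
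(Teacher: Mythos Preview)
Your argument is correct and is actually somewhat cleaner than the paper's. Both proofs obtain \eqref{Hab} in the same way, and both use the cross relation \eqref{r1r2} to convert \eqref{Ri2} into $\partial_x r_{n,i}=2r_{n,i}^2/R_{n,i}-N R_{n,i}$ for each $i$ separately (this is exactly the content of the paper's third preparatory lemma, equations \eqref{eq3}--\eqref{eq4}). The difference lies in how the $\partial_x a_i$ equations are obtained. The paper does \emph{not} invoke the companion identity $\partial_{s_1}R_{n,2}=\partial_{s_2}R_{n,1}$; instead it derives two further linear relations, one from $\partial_x\alpha_{n-1}$ (giving only the sum $\partial_x a_1+\partial_x a_2$, equation \eqref{eq2}) and one from differentiating $a_1b_1^2+a_2b_2^2=4\alpha_n\beta_n$ (equation \eqref{eq1}), and then solves the resulting $4\times 4$ linear system \eqref{eq1}--\eqref{eq4} for $\partial_x a_i,\partial_x b_i$.

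Your observation that $\partial_{s_i}\ln h_n=-R_{n,i}$ forces $\partial_{s_1}R_{n,2}=\partial_{s_2}R_{n,1}$, hence $\partial_x R_{n,i}=\partial_{s_i}(R_{n,1}+R_{n,2})$, lets you read \eqref{Ri1} (at level $n-1$) as an evolution equation for each $R_{n-1,i}$ individually; after eliminating $r_{n-1,i}$ via \eqref{rn+1} this gives $\partial_x a_i$ directly, and then $\partial_x b_i=\partial_x(r_{n,i}/a_i)$ follows by the quotient rule. This bypasses the linear-algebra step and the auxiliary identity $a_1b_1^2+a_2b_2^2=4\alpha_n\beta_n$ altogether. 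The price is that you must check the cross relation for $R$ holds at level $n-1$ as well, but as you note this is immediate from $\partial_{s_i}\ln h_{n-1}=-R_{n-1,i}$. One small remark: when you say ``differentiating \eqref{Hab} reproduces the right-hand sides in \eqref{cPiveq}'', note that $\partial_{a_i}H_{IV}$ actually equals the \emph{negative} of the stated right-hand side of $\partial_x b_i$ (so the system is Hamiltonian in the standard sense $\partial_x a_i=\partial_{b_i}H_{IV}$, $\partial_x b_i=-\partial_{a_i}H_{IV}$); this is a sign convention in the theorem statement itself and does not affect your computation.
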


Expression \eqref{Hab} follows directly from \eqref{sigmaRr} and \eqref{rRab}. To derive the coupled Painlev\'{e} IV system, we shall establish four linear equations in the variables $\partial_x a_i$ and $\partial_x b_i, i=1,2$.
Before proceeding further, we first present some results which will be used later for the derivation.

Since $\partial_x=\partial_{s_1}+\partial_{s_2}$, we readily get from \eqref{DhR} and \eqref{Dpr} that
\begin{align}
\partial_{x}\ln h_n(s_1,s_2)=&-\left(R_{n,1}+R_{n,2}\right),\label{DxhR}\\
\partial_{x}\,p(n,s_1,s_2)=&r_{n,1}+r_{n,2}.\label{Dxpr}
\end{align}
Noting that $\alpha_n=p(n,s_1,s_2)-p(n+1,s_1,s_2)$ and $\beta_n=h_n/h_{n-1}$, we find
\begin{align}
\partial_x\alpha_n(s_1,s_2)=&\sum_{i=1,2}\left(r_{n,i}-r_{n+1,i}\right),\label{Dxal}\\
\partial_{x} \beta_n(s_1,s_2)=&\beta_n\sum_{i=1,2}\left(R_{n-1,i}-R_{n,i}\right).\label{Dxbeta}
\end{align}

As an immediate consequence of \eqref{alR} and \eqref{betar}, we have
\begin{lemma}
The recurrence coefficients are expressed in terms of $a_i$ and $b_i$ by
\begin{gather}
\alpha_n=\frac{a_1b_1^2+a_2b_2^2}{2\left(a_1b_1+a_2b_2+n\right)},\label{alab}\\
\beta_n=\frac{1}{2}\left(a_1b_1+a_2b_2+n\right).\label{betab}
\end{gather}
\end{lemma}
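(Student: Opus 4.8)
The plan is to verify the two assertions of Lemma (the Hamiltonian formula \eqref{alab} for $\alpha_n$ and \eqref{betab} for $\beta_n$) directly from the finite-$n$ identities of Section 2, using the change of variables \eqref{rRab}. Since \eqref{betab} is essentially a restatement of \eqref{betar} once we substitute $r_{n,i}=a_ib_i$, the only real content is \eqref{alab}, and even that should follow mechanically from \eqref{alR} combined with the algebraic dictionary in \eqref{rRab}.

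First I would recall $\beta_n=\tfrac12(r_{n,1}+r_{n,2}+n)$ from \eqref{betar}. Substituting $r_{n,i}=a_ib_i$ from the second line of \eqref{rRab} gives $\beta_n=\tfrac12(a_1b_1+a_2b_2+n)$ at once, which is \eqref{betab}. Next, I would treat $\alpha_n$: by \eqref{alR}, $\alpha_n=\tfrac12(R_{n,1}+R_{n,2})$. Using the first line of \eqref{rRab}, $R_{n,i}=a_ib_i^2/(a_1b_1+a_2b_2+n)$, so that
\[
\alpha_n=\frac12\left(\frac{a_1b_1^2}{a_1b_1+a_2b_2+n}+\frac{a_2b_2^2}{a_1b_1+a_2b_2+n}\right)=\frac{a_1b_1^2+a_2b_2^2}{2(a_1b_1+a_2b_2+n)},
\]
which is \eqref{alab}. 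So the whole lemma reduces to plugging \eqref{rRab} into \eqref{alR} and \eqref{betar} and simplifying.

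The one point deserving care — and the only place I see a potential obstacle — is internal consistency of the substitution \eqref{rRab}: one should double-check that the definitions of $a_i,b_i$ in terms of $R_{n,i},r_{n,i}$ really do invert to \eqref{rRab}. From $a_i=r_{n,i}^2/(R_{n,i}(r_{n,1}+r_{n,2}+n))$ and $b_i=R_{n,i}(r_{n,1}+r_{n,2}+n)/r_{n,i}$ one computes $a_ib_i=r_{n,i}$ and $a_ib_i^2=R_{n,i}(r_{n,1}+r_{n,2}+n)$; since $r_{n,1}+r_{n,2}+n=2\beta_n=a_1b_1+a_2b_2+n$ by \eqref{betar}, this yields exactly \eqref{rRab}. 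Hence the substitution is legitimate and the lemma follows. I would present this as a two-line computation without further comment, since everything needed — \eqref{alR}, \eqref{betar}, and \eqref{rRab} — is already established.
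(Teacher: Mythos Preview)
Your proposal is correct and matches the paper's own approach: the paper states the lemma ``as an immediate consequence of \eqref{alR} and \eqref{betar}'' and gives no further argument, which is precisely the substitution of \eqref{rRab} into those two identities that you spell out. Your extra paragraph verifying that the definitions of $a_i,b_i$ invert to \eqref{rRab} is a welcome sanity check but not strictly required.
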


Replacing $r_{n,1}+r_{n,2}+n$ by $2\beta_n$ in the definitions of $a_i(x,s)$, which is due to \eqref{betar}, with the aid of \eqref{Rn-1}, we build the direct relationships between $a_i$ and the quantities with index $n-1$, i.e. $R_{n-1,i}, \alpha_{n-1}$ and $ \partial_x\ln h_{n-1}$.
.
\begin{lemma} We have
\begin{equation}\label{aR}
a_i(x,s)=\frac{R_{n-1,i}}{2},\qquad i=1,2,
\end{equation}
so that, in view of \eqref{alR} and \eqref{DxhR},
\begin{align}\label{aln-1}
\alpha_{n-1}(s_1,s_2)=a_1(x,s)+a_2(x,s)=-\frac{1}{2}\,\partial_x h_{n-1}(s_1,s_2).
\end{align}
\end{lemma}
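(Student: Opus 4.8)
The plan is to feed the already-established relations among the auxiliary quantities straight into the definition of $a_i$. First I would use \eqref{betar}, which gives $r_{n,1}+r_{n,2}+n=2\beta_n$, to rewrite
\[
a_i(x,s)=\frac{r_{n,i}^2}{R_{n,i}\,(r_{n,1}+r_{n,2}+n)}=\frac{r_{n,i}^2}{2\beta_n R_{n,i}},\qquad i=1,2.
\]
Then I would invoke the difference equation \eqref{Rn-1}, namely $\beta_n R_{n,i}R_{n-1,i}=r_{n,i}^2$; dividing by $\beta_n R_{n,i}$ yields $r_{n,i}^2/(\beta_n R_{n,i})=R_{n-1,i}$. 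Substituting this into the previous display gives $a_i=R_{n-1,i}/2$, which is \eqref{aR}.

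For the second part, I would apply \eqref{alR} with $n$ replaced by $n-1$, obtaining $\alpha_{n-1}=\tfrac12\bigl(R_{n-1,1}+R_{n-1,2}\bigr)$, and then use \eqref{aR} to identify the right-hand side with $a_1(x,s)+a_2(x,s)$. The remaining equality in \eqref{aln-1} follows from \eqref{DxhR} applied at index $n-1$, which expresses $R_{n-1,1}+R_{n-1,2}$ in terms of the logarithmic $x$-derivative of $h_{n-1}$; rearranging gives the stated formula.

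The derivation is essentially bookkeeping, so there is no substantial obstacle. The only points needing a little care are: (i) the index shift — \eqref{Rn-1}, \eqref{alR} and \eqref{DhR}/\eqref{DxhR} are used at $n-1$, which is legitimate for $n\ge 1$, so the lemma is understood in that range; and (ii) the divisions above, which require $\beta_n\neq 0$ and $R_{n,i}\neq 0$ — these hold because $B_i\neq 0$ together with the positivity/admissibility conditions imposed on the weight. I would also note in passing that the factor $\tfrac12$ and the shift $n\mapsto n-1$ are precisely what align $a_i$ (and, together with \eqref{rRab}, $b_i$) with a canonically conjugate pair, which is what makes the Hamiltonian structure of the next subsection possible.
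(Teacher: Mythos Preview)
Your argument is correct and follows exactly the route indicated in the paper: use \eqref{betar} to replace $r_{n,1}+r_{n,2}+n$ by $2\beta_n$ in the definition of $a_i$, then apply \eqref{Rn-1} to obtain \eqref{aR}, and finish with \eqref{alR} and \eqref{DxhR} at index $n-1$. One small remark: \eqref{Rn-1} itself is used as stated (no index shift needed), and your reading of the last equality in \eqref{aln-1} as involving $\partial_x\ln h_{n-1}$ rather than $\partial_x h_{n-1}$ is the intended one.
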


Now we are ready to deduce the four linear equations in $\partial_x a_i$ and $\partial_x b_i$, each of which will be stated as a lemma. We start from the combination of \eqref{alab} and \eqref{betab} which gives us
\begin{align}\label{eq11}
a_1b_1^2+a_2b_2^2=4\alpha_n\beta_n.
\end{align}
\begin{lemma}
We have
\begin{equation*}
\begin{aligned}
b_1^2\left(\partial_x a_1\right)+&b_2^2\left(\partial_x{a_2}\right)+2a_1b_1\left(\partial_x b_1\right)+2a_2b_2\left(\partial_x {b_2}\right)\\
=&4(a_1a_2+b_1b_2+n)(a_1a_2+b_1b_2)+2a_1b_1^2(a_1+a_2-s_1)+2a_2b_2^2(a_1+a_2-s_2).
\end{aligned}
\end{equation*}
\end{lemma}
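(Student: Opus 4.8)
The plan is to differentiate identity \eqref{eq11}, namely $a_1b_1^2+a_2b_2^2=4\alpha_n\beta_n$, with respect to $x$ and then rewrite the right-hand side using the finite-$n$ results of section~2. Differentiating the left side is immediate and produces exactly the combination $b_1^2(\partial_x a_1)+b_2^2(\partial_x a_2)+2a_1b_1(\partial_x b_1)+2a_2b_2(\partial_x b_2)$ that appears on the left of the claimed identity. So the real content is to show that $\partial_x(4\alpha_n\beta_n)=4\alpha_n'\beta_n+4\alpha_n\beta_n'$ equals the stated right-hand side.

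For $\partial_x\beta_n$ I would use \eqref{Dxbeta}, $\partial_x\beta_n=\beta_n\sum_i(R_{n-1,i}-R_{n,i})$, together with \eqref{aR} which gives $R_{n-1,i}=2a_i$, and \eqref{rRab} which expresses $R_{n,i}$ and $\beta_n$ in terms of $a_i,b_i$; this turns $\beta_n'$ into a rational expression in $a_i,b_i,n$. For $\partial_x\alpha_n$ the cleanest route is probably \eqref{Dxal}, $\partial_x\alpha_n=\sum_i(r_{n,i}-r_{n+1,i})$, but eliminating the shifted quantity $r_{n+1,i}$ requires \eqref{rn+1}, $r_{n+1,i}+r_{n,i}=(s_i-\alpha_n)R_{n,i}$, so $r_{n,i}-r_{n+1,i}=2r_{n,i}-(s_i-\alpha_n)R_{n,i}$; then substitute $r_{n,i}=a_ib_i$, $R_{n,i}=a_ib_i^2/(a_1b_1+a_2b_2+n)$, $\alpha_n$ from \eqref{alab}, and $s_1=x-s$, $s_2=x+s$. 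Alternatively one could differentiate \eqref{alab} directly, but that reintroduces $\partial_x a_i,\partial_x b_i$, which is circular here, so the difference-equation route is the right one.

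After these substitutions both $\alpha_n'\beta_n$ and $\alpha_n\beta_n'$ become explicit rational functions of $a_1,a_2,b_1,b_2,n,x,s$. The remaining work is to verify algebraically that $4(\alpha_n'\beta_n+\alpha_n\beta_n')$ collapses to $4(a_1a_2+b_1b_2+n)(a_1a_2+b_1b_2)+2a_1b_1^2(a_1+a_2-s_1)+2a_2b_2^2(a_1+a_2-s_2)$; here one should re-expand $s_1,s_2$ back in, since the target is written in terms of $s_i$. I expect the main obstacle to be exactly this simplification: clearing the common denominator $a_1b_1+a_2b_2+n$ (which appears in $\beta_n$, in $R_{n,i}$, and hence several times over) produces a fairly large polynomial identity, and one must be careful that the $\beta_n$ hidden inside $R_{n,i}=2a_ib_i^2/(a_1b_1+a_2b_2+n)$ and the explicit $\beta_n=\tfrac12(a_1b_1+a_2b_2+n)$ are treated consistently. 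A useful sanity check along the way is the $B_2=0$ reduction ($a_2=b_2=0$), where the identity must degenerate to the corresponding one-variable relation, and the symmetry $s_1\leftrightarrow s_2$, $(a_1,b_1)\leftrightarrow(a_2,b_2)$, which the stated right-hand side manifestly respects.
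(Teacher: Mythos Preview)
Your proposal is correct and follows essentially the same route as the paper: differentiate \eqref{eq11} with respect to $x$, compute $\partial_x\alpha_n$ by eliminating $r_{n+1,i}$ from \eqref{Dxal} via \eqref{rn+1}, compute $\partial_x\beta_n$ from \eqref{Dxbeta} with $R_{n-1,i}=2a_i$ from \eqref{aR}, and then substitute \eqref{rRab}, \eqref{alab}, \eqref{betab}. The paper presents the intermediate line $4\beta_n\bigl(2(r_{n,1}+r_{n,2})-s_1R_{n,1}-s_2R_{n,2}\bigr)+8\alpha_n\beta_n(a_1+a_2)$ before the final substitution, which makes the algebraic collapse you were worried about essentially immediate (note that the $\alpha_n R_{n,i}$ terms from $\partial_x\alpha_n$ cancel against the $-\beta_n(R_{n,1}+R_{n,2})$ term from $\partial_x\beta_n$ after multiplying by $4\beta_n$ and $4\alpha_n$ respectively).
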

\begin{proof}
Taking the derivative on both sides of \eqref{eq11} with respect to $x$, we have
\begin{align}\label{eq41}
\partial_x\left( a_1b_1^2+a_2b_2^2\right) =4\beta_n \left(\partial_x \alpha_n\right)+4\alpha_n \left(\partial_x \beta_n\right).
\end{align}
Now we shall make use of \eqref{Dxal} and \eqref{Dxbeta} to derive the expressions of $\partial_x\alpha_n$ and $\partial_x\beta_n$ in terms of $a_i,b_i$ or $R_{n,i}, r_{n,i}$.
Using \eqref{rn+1} to get rid of $r_{n+1,i},i=1,2,$ in \eqref{Dxal}, we find
\begin{align*}
\partial_x \alpha_n(s_1,s_2)=\sum_{i=1,2}\left(2r_{n,i}+\left(\alpha_n-s_i\right)R_{n,i}\right).
\end{align*}
On account of \eqref{aR}, we replace $R_{n-1,i}$ by $2a_i$ in \eqref{Dxbeta} and get
\begin{align*}
\partial_{x} \beta_n(s_1,s_2)=-\beta_n\left(R_{n,1}+R_{n,2}\right)+2\beta_n\left(a_1+a_2\right).
\end{align*}
Plugging the above two identities into \eqref{eq41}, we obtain
\begin{align*}
b_1^2\left(\partial_x a_1\right) +&2a_1b_1\left(\partial_x b_1\right)+b_2^2\left(\partial_x{a_2}\right)+2a_2b_2\left(\partial_x {b_2}\right)\\
=&4\beta_n\left(2(r_{n,1}+r_{n,2})-s_1R_{n,1}-s_2R_{n,2}\right)+8\alpha_n\beta_n(a_1+a_2).
\end{align*}
On substituting \eqref{rRab}, \eqref{alab} and \eqref{betab} into this equation, we come to the desired result.
\end{proof}

Replacing $n$ by $n-1$ in \eqref{Dxal} and \eqref{rn+1}, we have
\begin{align*}
\partial_x\alpha_{n-1}(s_1,s_2)=&\sum_{i=1,2}\left(r_{n-1,i}-r_{n,i}\right),\\
r_{n,i}+r_{n-1,i}=&\left(s_i-\alpha_{n-1}\right)R_{n-1,i}.
\end{align*}
Using the second equality to remove $r_{n-1,i}$ in the first one, we are led to
\[\partial_{x}\alpha_{n-1}(s_1,s_2)=\sum_{i=1,2}\left(\left(s_i-\alpha_{n-1}\right)R_{n-1,i}-2r_{n,i}\right).
\]
According to \eqref{aln-1} and \eqref{aR}, we replace $\alpha_{n-1}$ by $a_1+a_2$ and $R_{n-1,i}$ by $2a_i$ in the above identity. By taking note that $r_{n,i}=a_ib_i, i=1,2$, we come to the following equation.
\begin{lemma}
We have
\begin{align*}
\partial_x a_1+\partial_x a_2=-2a_1\left(a_1+a_2+b_1-s_1\right)-2a_2\left(a_1+a_2+b_2-s_2\right).
\end{align*}
\end{lemma}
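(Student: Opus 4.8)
The plan is to derive the stated identity by differentiating the known relation $\alpha_{n-1}(s_1,s_2)=a_1(x,s)+a_2(x,s)$ from \eqref{aln-1} with respect to $x$, and then expressing the right-hand side purely in terms of $a_i$ and $b_i$. Since $\partial_x=\partial_{s_1}+\partial_{s_2}$, I first write the analog of \eqref{Dxal} with $n$ replaced by $n-1$, namely $\partial_x\alpha_{n-1}=\sum_{i=1,2}(r_{n-1,i}-r_{n,i})$. The quantities $r_{n-1,i}$ are awkward, so I would eliminate them using the difference equation \eqref{rn+1} shifted down by one index, $r_{n,i}+r_{n-1,i}=(s_i-\alpha_{n-1})R_{n-1,i}$, which gives $r_{n-1,i}=(s_i-\alpha_{n-1})R_{n-1,i}-r_{n,i}$. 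Substituting yields $\partial_x\alpha_{n-1}=\sum_{i=1,2}\bigl((s_i-\alpha_{n-1})R_{n-1,i}-2r_{n,i}\bigr)$.

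Next I would translate everything into the $a_i,b_i$ variables. By \eqref{aR} we have $R_{n-1,i}=2a_i$, by \eqref{aln-1} we have $\alpha_{n-1}=a_1+a_2$, and by the second line of \eqref{rRab} we have $r_{n,i}=a_ib_i$. Plugging these in, the $i=1$ term becomes $(s_1-a_1-a_2)\cdot 2a_1-2a_1b_1=-2a_1(a_1+a_2+b_1-s_1)$, and similarly the $i=2$ term is $-2a_2(a_1+a_2+b_2-s_2)$. Since the left-hand side is $\partial_x\alpha_{n-1}=\partial_x(a_1+a_2)=\partial_x a_1+\partial_x a_2$, this is exactly the claimed equation.

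There is essentially no obstacle here: every ingredient — the derivative formula for $\alpha_{n-1}$, the shifted recurrence \eqref{rn+1}, and the dictionary \eqref{aR}, \eqref{aln-1}, \eqref{rRab} between the old and new variables — has already been established in the preceding lemmas, and the manipulation is a short substitution. The only mild care needed is to make sure the index shift in \eqref{Dxal} and \eqref{rn+1} is applied consistently (replace $n$ by $n-1$ throughout, so that $R_{n,i}$ becomes $R_{n-1,i}$ and $\alpha_n$ becomes $\alpha_{n-1}$), and that one does not accidentally introduce $r_{n+1,i}$ terms instead of $r_{n-1,i}$. The result will serve as the third of the four linear equations in $\partial_x a_i,\partial_x b_i$ needed to invert and obtain the coupled Painlev\'{e} IV system of Theorem \ref{cPiv}.
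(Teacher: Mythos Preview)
Your proposal is correct and follows essentially the same route as the paper: shift \eqref{Dxal} and \eqref{rn+1} to index $n-1$, eliminate $r_{n-1,i}$, and then substitute $R_{n-1,i}=2a_i$, $\alpha_{n-1}=a_1+a_2$, $r_{n,i}=a_ib_i$ from \eqref{aR}, \eqref{aln-1}, \eqref{rRab}. The only inaccuracy is cosmetic: in the paper's ordering this is the second of the four linear equations (it is \eqref{eq2}), not the third.
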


The next equation is obtained by combining the two expressions involving $\beta_n$ and $\partial_x\beta_n$.

\begin{lemma}
We have
\begin{align*}
b_i\left(\partial_x a_i\right)+a_i\left(\partial_x b_i\right)=& 2a_i(a_1b_1+a_2b_2+n)-a_ib_i^2,\qquad i=1,2.
\end{align*}
\end{lemma}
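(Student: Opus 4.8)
The plan is to establish the identity
\[
b_i\left(\partial_x a_i\right)+a_i\left(\partial_x b_i\right)= 2a_i(a_1b_1+a_2b_2+n)-a_ib_i^2,\qquad i=1,2,
\]
by differentiating the relation $r_{n,i}=a_ib_i$ and separately computing $\partial_x r_{n,i}$ from the $\beta_n$-side information already assembled. First I would note that $a_ib_i=r_{n,i}$ and $a_1b_1+a_2b_2+n=2\beta_n$ by \eqref{betar} and \eqref{rRab}, so the target is equivalent to $\partial_x r_{n,i}=4\beta_n a_i-a_ib_i^2$. Since $a_i=R_{n-1,i}/2$ by \eqref{aR}, and $\beta_n=h_n/h_{n-1}$, the right-hand side is $2\beta_n R_{n-1,i}-r_{n,i}b_i^2$; moreover $b_i=R_{n,i}(r_{n,1}+r_{n,2}+n)/r_{n,i}=2\beta_n R_{n,i}/r_{n,i}$, so $r_{n,i}b_i^2=4\beta_n^2 R_{n,i}^2/r_{n,i}=4\beta_n R_{n-1,i}$ after using the difference equation \eqref{Rn-1} in the form $\beta_n R_{n,i}R_{n-1,i}=r_{n,i}^2$, i.e. $\beta_n R_{n,i}^2/r_{n,i}=R_{n-1,i}r_{n,i}/R_{n,i}\cdot(\beta_n R_{n,i}/r_{n,i})$— here I would be careful and instead write $r_{n,i}b_i^2 = (a_ib_i)b_i^2 = a_i b_i^3$ and substitute the formulas directly. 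The cleaner route: from \eqref{rRab} we have $r_{n,i}=a_ib_i$ directly, so $\partial_x r_{n,i}=b_i(\partial_x a_i)+a_i(\partial_x b_i)$, which is exactly the left side of the claim; hence the lemma reduces to showing $\partial_x r_{n,i}=2a_i\cdot 2\beta_n-a_ib_i^2$.

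To get $\partial_x r_{n,i}$ I would exploit the definition $r_{n,i}=R_{n-1,i}\cdot\tfrac{r_{n,i}}{R_{n-1,i}}$ is circular, so instead I would start from the two pieces of structure around $\beta_n$ that the excerpt has just prepared. Recall $a_i(x,s)=r_{n,i}^2/\big(R_{n,i}(r_{n,1}+r_{n,2}+n)\big)$; using \eqref{betar} this is $a_i=r_{n,i}^2/(2\beta_n R_{n,i})$, and by \eqref{Rn-1} this equals $R_{n-1,i}/2$, consistent with \eqref{aR}. Now differentiate $r_{n,i}=a_i b_i$ is not independent information; the genuinely new input must come from a derivative formula for $r_{n,i}$ or $R_{n-1,i}$ in $x$. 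The natural source is: differentiate the product representation $2\beta_n a_i = r_{n,i}^2/R_{n,i}$, i.e. $r_{n,i}^2 = 2\beta_n a_i R_{n,i} = \beta_n R_{n-1,i}R_{n,i}$, which is just \eqref{Rn-1} again; so I would instead differentiate the simpler consequence $r_{n,i}=a_ib_i$ combined with $b_i=2\beta_n R_{n,i}/r_{n,i}$, giving $r_{n,i}^2 = 2\beta_n a_i R_{n,i}$. The cleanest independent relation is actually the one hidden in \eqref{Ri2}: from $\partial_{s_i}(r_{n,1}+r_{n,2})=2r_{n,i}^2/R_{n,i}-(n+r_{n,1}+r_{n,2})R_{n,i}$ and \eqref{r1r2} together with the analogous $\partial_{s_j}r_{n,i}$ for $j\neq i$, one recovers $\partial_x r_{n,i}=\partial_{s_1}r_{n,i}+\partial_{s_2}r_{n,i}$.

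Concretely, the steps I would carry out are: (1) write $\partial_x r_{n,i}=\partial_{s_i}r_{n,i}+\partial_{s_j}r_{n,i}$ with $j\neq i$; (2) for the first term, use $\partial_{s_i}p(n)=r_{n,i}$ from \eqref{Dpr} and the fact that the second $s_i$-derivative structure is governed by the $\beta_n$ equations — more efficiently, use that $\partial_{s_i}\beta_n = r_{n,i}^2/R_{n,i}-\beta_n R_{n,i}$ (shown in the proof of the previous Lemma) together with $\beta_n=\tfrac12(r_{n,1}+r_{n,2}+n)$ to extract the diagonal contribution; (3) for the off-diagonal term $\partial_{s_j}r_{n,i}$, use \eqref{r1r2} to swap it into $\partial_{s_i}r_{n,j}$ only when $i,j$ are the two distinct indices, but since there are only two indices this just says $\partial_{s_2}r_{n,1}=\partial_{s_1}r_{n,2}$, so I would instead obtain $\partial_{s_j}r_{n,i}$ from differentiating the difference equation \eqref{rn+1}, $r_{n+1,i}+r_{n,i}=(s_i-\alpha_n)R_{n,i}$, with respect to $s_j$ and using \eqref{Dal}, \eqref{Dbeta}, \eqref{Rn-1}; (4) add the two contributions, substitute $R_{n-1,i}=2a_i$, $r_{n,i}=a_ib_i$, $2\beta_n=a_1b_1+a_2b_2+n$, and simplify to reach $2a_i(a_1b_1+a_2b_2+n)-a_ib_i^2$. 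The main obstacle I anticipate is step (3): cleanly isolating $\partial_{s_j}r_{n,i}$ for $i\neq j$ without circularity, since the Riccati-type equations \eqref{Ri1}–\eqref{Ri2} only give the symmetric combinations $\partial_{s_i}(R_{n,1}+R_{n,2})$ and $\partial_{s_i}(r_{n,1}+r_{n,2})$; resolving this will require combining them with the compatibility relation \eqref{r1r2} and the individual derivative formulas \eqref{Dal}, \eqref{Dbeta}, and keeping track of which index carries which $s$-variable. Once that cross-derivative is pinned down, the remaining algebra is a routine substitution of \eqref{rRab}, \eqref{alab}, \eqref{betab}, and \eqref{aR}.
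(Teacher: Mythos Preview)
Your reduction is correct: since $r_{n,i}=a_ib_i$, the left side is $\partial_x r_{n,i}$, and after using $R_{n-1,i}=2a_i$ and $(a_1b_1+a_2b_2+n)R_{n,i}=a_ib_i^2$ from \eqref{rRab}, the target becomes
\[
\partial_x r_{n,i}=(r_{n,1}+r_{n,2}+n)\left(R_{n-1,i}-R_{n,i}\right).
\]
But you then flag the cross-derivative $\partial_{s_j}r_{n,i}$ ($j\neq i$) as the ``main obstacle'' and propose to extract it by differentiating \eqref{rn+1} in $s_j$. That route does not close: it generates $\partial_{s_j}R_{n,i}$ and $\partial_{s_j}r_{n+1,i}$, neither of which is supplied by the available relations (the Riccati analogs \eqref{Ri1}--\eqref{Ri2} only control the symmetric sums, and \eqref{Dal}, \eqref{Dbeta} do not give the individual $\partial_{s_j}R_{n,i}$).

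The step you keep circling but do not land on is that the swap \eqref{r1r2} already does all the work. With $j\neq i$,
\[
\partial_x r_{n,i}
=\partial_{s_i}r_{n,i}+\partial_{s_j}r_{n,i}
=\partial_{s_i}r_{n,i}+\partial_{s_i}r_{n,j}
=\partial_{s_i}\!\left(r_{n,1}+r_{n,2}\right),
\]
and this last quantity is \emph{exactly} what \eqref{Ri2} (equivalently $2\,\partial_{s_i}\beta_n$ via \eqref{betar} and \eqref{Dbeta}) computes. Hence
\[
\partial_x r_{n,i}=\frac{2r_{n,i}^2}{R_{n,i}}-(n+r_{n,1}+r_{n,2})R_{n,i}
=(r_{n,1}+r_{n,2}+n)\left(R_{n-1,i}-R_{n,i}\right),
\]
the last equality by \eqref{Rn-1}. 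Substituting \eqref{rRab} and \eqref{aR} finishes the lemma. This is precisely the paper's argument: no separate ``diagonal'' versus ``off-diagonal'' bookkeeping is needed, and no recourse to \eqref{rn+1} is required.
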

\begin{proof}
Plugging \eqref{betar} into \eqref{Dbeta}, we get
\[\partial_{s_i}\left(r_{n,1}+r_{n,2}\right)=\left(r_{n,1}+r_{n,2}+n\right)\left(R_{n-1,i}-R_{n,i}\right),\qquad i=1,2.\]
In view of \eqref{r1r2}, i.e. $\partial_{s_2}r_{n,1}=\partial_{s_1}r_{n,2}$, we find
\begin{align*}
\left(\partial_{s_1}+\partial_{s_2}\right)r_{n,1}=\left(r_{n,1}+r_{n,2}+n\right)\left(R_{n-1,1}-R_{n,1}\right),\\ \left(\partial_{s_1}+\partial_{s_2}\right)r_{n,2}=\left(r_{n,1}+r_{n,2}+n\right)\left(R_{n-1,2}-R_{n,2}\right).
\end{align*}
Since $\partial_x=\partial_{s_1}+\partial_{s_2}$, it follows that
\begin{align*}
\partial_x r_{n,i}=\left(r_{n,1}+r_{n,2}+n\right)\left(R_{n-1,i}-R_{n,i}\right),\qquad i=1,2.
\end{align*}
Using \eqref{rRab} to replace $r_{n,i}$ and $R_{n,i}$ in this expression, and substituting $2a_i$ for $R_{n-1,i}$, which is due to \eqref{aR}, we complete the proof.
\end{proof}
\noindent
{\bf Proof of Theorem \ref{cPiv}} Now we have four linear equations in $\partial_{x}a_1,\partial_{x}a_2,\partial_{x}b_1$ and $\partial_{x}b_2$, namely,
\begin{align}
b_1^2\left(\partial_x a_1\right)+b_2^2\left(\partial_x{a_2}\right)+&2a_1b_1\left(\partial_x b_1\right)+2a_2b_2\left(\partial_x {b_2}\right)\nonumber\\
=&4(a_1a_2+b_1b_2+n)(a_1a_2+b_1b_2)+2a_1b_1^2(a_1+a_2-s_1)+2a_2b_2^2(a_1+a_2-s_2),\label{eq1}\\
\partial_x a_1+\partial_x a_2=&-2a_1\left(a_1+a_2+b_1-s_1\right)-2a_2\left(a_1+a_2+b_2-s_2\right),\label{eq2}\\
b_1\left(\partial_x a_1\right)+a_1\left(\partial_x b_1\right)=& 2a_1(a_1b_1+a_2b_2+n)-a_1b_1^2,\label{eq3}\\
b_2\left(\partial_x a_2\right)+a_2\left(\partial_x b_2\right)=& 2a_2(a_1b_1+a_2b_2+n)-a_2b_2^2.\label{eq4}
\end{align}
Subtracting \eqref{eq1} from the sum of \eqref{eq3} multiplied by $2b_1$ and \eqref{eq4} multiplied by $2b_2$, we get
\begin{align}\label{eq5}
b_1^2\left(\partial_x a_1\right)+b_2^2\left(\partial_x a_2\right)=-2a_1b_1^2\left(a_1+a_2+b_1-s_1\right)-2a_2b_2^2\left(a_1+a_2+b_2-s_2\right).
\end{align}
Combining \eqref{eq2} with \eqref{eq5} to solve for $\partial_{x}a_1$ and $\partial_{x}a_2$, and substituting the resulting expressions into \eqref{eq3} and \eqref{eq4}, we arrive at the desired coupled Painlev\'{e} IV system \eqref{cPiveq}. \quad\hfill $\square$

\begin{remark}
The Hamiltonian of the coupled Painlev\'{e} IV system presented in Theorem \ref{cPiv} is the same as the one given by (1.15) and (1.16) of \cite{WuXu20} which was derived via the Riemann-Hilbert approach. Taking note that our symbols $\beta_n$ and $h_n$ correspond to $\beta_n^2$ and $\gamma_n^{-2}$ of \cite{WuXu20}, we find that our equations \eqref{alab}, \eqref{betab} and \eqref{aln-1} are consistent with (1.23), (1.24) and (1.26) of \cite{WuXu20} respectively.
\end{remark}

\section{Coupled Painlev\'{e} II System at the Soft Edge}
We remind the reader that our weight function is obtained by multiplying the Gaussian weight by a factor with two jumps, i.e.
\[w(x;s_1,s_2)=\e^{-x^2}\left(A+B_1\theta(x-s_1)+B_2\theta(x-s_2)\right),\] where $B_1B_2\neq0$. In this section, we discuss the asymptotic behavior of the associated Hankel determinant when $s_1$ and $s_2$ tend to the soft edge of the spectrum of GUE, namely,
\[s_i:=\sqrt{2n}+\frac{t_i}{\sqrt{2}n^{1/6}}, \qquad i=1,2.\]

This double scaling may be explained in the following way. As we know, the classical Hermite polynomials $H_n(x)$ are orthogonal with respect to the Gaussian weight $\e^{-x^2}, x\in(-\infty,\infty)$. Under the double scaling $x=\sqrt{2n}+\frac{t}{\sqrt{2}n^{1/6}}$ and as $n\rightarrow\infty$, the Hermite function $\e^{-x^2/2}H_n(x)$ is approximated by the Airy function $A(x)$ multiplied by a factor involving $n$ \cite[Formula (8.22.14)]{Szego}. See also \cite[formula (3.6)]{Forrester} and \cite[Theorem 2.1]{MinChen20_1} for more explanation about this double scaling.

When $B_1=0$ or $B_2=0$, our weight function has only one jump. This case was studied in \cite{MinChen19} and the expansion formula for $R_n(s_1):=R_{n,1}(s_1,0)$ in large $n$ was given by
\[R_n(s_1)=n^{-1/6}v_1(t_1)+n^{-1/2}v_2(t_1)+n^{-5/6}v_3(t_1)+O\left(n^{-7/6}\right).\]
It was obtained by using the second order ordinary differential equation satisfied by $R_n(s_1)$ (see Theorem 2.10, \cite{MinChen19}). Hence, for our two jump case where $B_1B_2\neq0$, we assume
\begin{subequations}\label{Rnex}
\begin{equation}\label{Rn1ex}
R_{n,1}(s_1,s_2)=\sum_{i=1}^{\infty}\mu_{i}(t_1,t_2)\cdot n^{(1-2i)/6},
\end{equation}
\begin{equation}\label{Rn2ex}
R_{n,2}(s_1,s_2)=\sum_{i=1}^{\infty}\nu_{i}(t_1,t_2)\cdot n^{(1-2i)/6}.
\end{equation}
\end{subequations}
From the compatibility condition $\partial_{s_1s_2}^2R_{n,i}=\partial_{s_2s_1}^2R_{n,i},i=1,2$, it follows that
\begin{equation}\label{mnt12}
\begin{aligned}
\partial_{t_1t_2}^2\mu_{1}(t_1,t_2)=\partial_{t_2t_1}^2\mu_1(t_1,t_2),\\
\partial_{t_1t_2}^2\nu_{1}(t_1,t_2)=\partial_{t_2t_1}^2\nu_1(t_1,t_2).
\end{aligned}
\end{equation}
We keep these two relations in mind in the subsequent discussions.

Substituting \eqref{Rnex} into the left hand side of \eqref{RPDE1} and \eqref{RPDE2}, by taking their series expansions in large $n$ and setting the leading coefficients to be zero, we get a coupled PDEs satisfied by $\mu_1$ and $\nu_1$.

\begin{theorem}
The leading coefficients in the expansions of $R_{n,i}$ in large $n$, i.e.
\begin{align*}
\mu_1(t_1,t_2)=&\lim\limits_{n\rightarrow\infty}n^{1/6}R_{n,1}(s_1,s_2),\\
\nu_1(t_1,t_2)=&\lim\limits_{n\rightarrow\infty}n^{1/6} R_{n,2}(s_1,s_2),
\end{align*}
satisfy the following coupled PDEs
\begin{subequations}\label{mnuPDE}
\begin{equation}\label{mnuPDE1}
\left(\partial_{t_1t_1}^2+\partial_{t_1t_2}^2\right)(\mu_1+\nu_1)-\frac{\left(\partial_{t_1}(\mu_1+\nu_1)\right)^2}{2\mu_1}+2\mu_1(\sqrt{2}(\mu_1+\nu_1)-t_1)=0,
\end{equation}
\begin{equation}\label{mnuPDE2}
\left(\partial_{t_2t_2}^2+\partial_{t_1t_2}^2\right)(\mu_1+\nu_1)-\frac{\left(\partial_{t_2}(\mu_1+\nu_1)\right)^2}{2\nu_1}+2\nu_1(\sqrt{2}(\mu_1+\nu_1)-t_2)=0.
\end{equation}
\end{subequations}
\end{theorem}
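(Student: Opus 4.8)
The plan is to feed the double scaling $s_i=\sqrt{2n}+t_i/(\sqrt2\,n^{1/6})$, $i=1,2$, together with the assumed expansions \eqref{Rnex}, into the coupled PDEs \eqref{RPDE1}--\eqref{RPDE2}, expand everything in descending powers of $n^{1/6}$, and set the coefficient of the leading power equal to zero. First I would record the change of variables: since $\partial s_i/\partial t_j=\delta_{ij}/(\sqrt2\,n^{1/6})$, one has $\partial_{s_i}=\sqrt2\,n^{1/6}\,\partial_{t_i}$ and $\partial_{s_is_j}^2=2n^{1/3}\,\partial_{t_it_j}^2$. Writing $R_{n,1}=\mu_1 n^{-1/6}+\mu_2 n^{-1/2}+\cdots$ and $R_{n,2}=\nu_1 n^{-1/6}+\nu_2 n^{-1/2}+\cdots$, this immediately gives $R_{n,i}=O(n^{-1/6})$, $\partial_{s_i}(R_{n,1}+R_{n,2})=O(1)$, $(\partial_{s_1s_1}^2+\partial_{s_1s_2}^2)(R_{n,1}+R_{n,2})=2(\partial_{t_1t_1}^2+\partial_{t_1t_2}^2)(\mu_1+\nu_1)\,n^{1/6}+O(n^{-1/6})$, while $s_2-s_1=(t_2-t_1)n^{-1/6}/\sqrt2=O(n^{-1/6})$, $s_1,\,s_1+s_2\sim\sqrt{2n}$, and $s_1^2=2n+2t_1n^{1/3}+\tfrac12 t_1^2 n^{-1/3}$.

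Next I would go through \eqref{RPDE1} term by term. The decisive item is $R_{n,1}$ times the polynomial $2\big(2s_1R_{n,1}+(s_1+s_2)R_{n,2}-s_1^2+2n+1\big)$: the $O(n)$ pieces cancel (the $-2n$ against the $+2n$), there is no $O(n^{1/2})$ piece since $s_1R_{n,1}$ and $(s_1+s_2)R_{n,2}$ are each only $O(n^{1/3})$, and the surviving $O(n^{1/3})$ part equals $\big(4\sqrt2(\mu_1+\nu_1)-4t_1\big)n^{1/3}+O(1)$; multiplying by $R_{n,1}=\mu_1 n^{-1/6}+\cdots$ produces $4\mu_1\big(\sqrt2(\mu_1+\nu_1)-t_1\big)n^{1/6}+O(n^{-1/6})$. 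Likewise $\partial_{s_1}(R_{n,1}+R_{n,2})\big(\tfrac{\partial_{s_1}(R_{n,1}+R_{n,2})}{2R_{n,1}}+R_{n,2}\big)=\tfrac{(\partial_{t_1}(\mu_1+\nu_1))^2}{\mu_1}\,n^{1/6}+O(n^{-1/6})$, the $R_{n,2}$ summand being only $O(n^{-1/6})$; the term $2(s_2-s_1)\partial_{s_1}R_{n,2}$ is $O(n^{-1/6})$; and the remaining pieces $R_{n,1}\partial_{s_2}(R_{n,1}+R_{n,2})$, $R_{n,1}(R_{n,1}+R_{n,2})^2$ and $2R_{n,1}$ are all of order at most $n^{-1/6}$. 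Hence \eqref{RPDE1} is a relation of order $n^{1/6}$, and collecting the leading coefficient gives
\[
2(\partial_{t_1t_1}^2+\partial_{t_1t_2}^2)(\mu_1+\nu_1)-\frac{(\partial_{t_1}(\mu_1+\nu_1))^2}{\mu_1}+4\mu_1\big(\sqrt2(\mu_1+\nu_1)-t_1\big)=0,
\]
which after division by $2$ is precisely \eqref{mnuPDE1}.

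For \eqref{mnuPDE2} I would invoke the symmetry already noted for the system \eqref{RPDE}: the interchange $s_1\leftrightarrow s_2$, $R_{n,1}\leftrightarrow R_{n,2}$ carries \eqref{RPDE1} into \eqref{RPDE2}, and under the present scaling it is the interchange $t_1\leftrightarrow t_2$, $\mu_1\leftrightarrow\nu_1$; applying the same expansion to \eqref{RPDE2} therefore yields \eqref{mnuPDE2}, or one may simply repeat the bookkeeping verbatim. I would also record, as a by-product, that substituting \eqref{Rnex} into the compatibility relations $\partial_{s_1s_2}^2R_{n,i}=\partial_{s_2s_1}^2R_{n,i}$ reproduces \eqref{mnt12}; this plays no role in the present statement but is used later.

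I expect the only genuine obstacle to be the order bookkeeping, specifically verifying that the seemingly larger $O(n)$ (and $O(n^{1/2})$) contributions generated by $s_1^2$ and by products such as $s_1R_{n,1}$ and $(s_1+s_2)R_{n,2}$ either cancel against one another or are simply absent, so that the whole of \eqref{RPDE1} balances at the single order $n^{1/6}$ and no further constraint on $\mu_1,\nu_1$ (nor any relation forcing $\mu_2,\nu_2$) is imposed at this stage. Finally, as in \cite{MinChen19}, the expansion \eqref{Rnex} is taken as a working hypothesis, so the derivation is formal in that respect; proving the validity of the expansion itself is not part of the statement.
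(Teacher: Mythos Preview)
Your proposal is correct and follows exactly the approach the paper takes: substitute the double scaling $s_i=\sqrt{2n}+t_i/(\sqrt{2}\,n^{1/6})$ and the expansions \eqref{Rnex} into the coupled PDEs \eqref{RPDE1}--\eqref{RPDE2}, expand in large $n$, and read off the leading $n^{1/6}$ coefficient. The paper states this in one sentence without displaying the intermediate bookkeeping; your term-by-term order count (in particular the cancellation of the $O(n)$ contributions from $-s_1^2$ and $+2n$, and the identification of the surviving $O(n^{1/3})$ piece) is precisely what that sentence unpacks, and your symmetry argument for \eqref{mnuPDE2} matches the remark following \eqref{RPDE}.
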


Plugging \eqref{Rnex} into \eqref{Ri1}, we get
\begin{subequations}\label{rnex}
\begin{equation}
r_{n,1}(s_1,s_2)=\frac{\mu_1}{\sqrt{2}}n^{1/3}+\frac{\mu_2}{\sqrt{2}}+\frac{\sqrt{2}}{4}\partial_{t_1}(\mu_1+\nu_1)+O(n^{-1/3}),
\end{equation}
\begin{equation}
r_{n,2}(s_1,s_2)=\frac{\nu_1}{\sqrt{2}}n^{1/3}+\frac{\nu_2}{\sqrt{2}}+\frac{\sqrt{2}}{4}\partial_{t_2}(\mu_1+\nu_1)+O(n^{-1/3}).
\end{equation}
\end{subequations}
Hence, according to \eqref{r1r2}, i.e. $\partial_{s_2}r_{n,1}=\partial_{s_1}r_{n,2}$, we find
\begin{align}\label{Dmn}
\partial_{t_2}\mu_1(t_1,t_2)=&\partial_{t_1}\nu_1(t_1,t_2),
\end{align}

To continue, we define
\begin{align*}
v_1(t_1,t_2-t_1):=&-\frac{\mu_1(t_1,t_2)}{\sqrt{2}},\\
v_2(t_1,t_2-t_1):=&-\frac{\nu_1(t_1,t_2)}{\sqrt{2}}.
\end{align*}
With the aid of \eqref{Dmn}, we establish the following differential relations.
\begin{lemma}
We have
\begin{subequations}\label{Dmnu}
\begin{align}\label{Dm1}
\partial_{t_1}\left(\mu_1(t_1,t_2)+\nu_1(t_1,t_2)\right)=&-\sqrt{2}v_{1\xi}(t_1,t_2-t_1),
\end{align}
\begin{align}\label{Dn1}
\partial_{t_2}\left(\mu_1(t_1,t_2)+\nu_1(t_1,t_2)\right)=&-\sqrt{2}v_{2\xi}(t_1,t_2-t_1),
\end{align}
\end{subequations}
where $v_{i\xi}(i=1,2)$ denotes the first order derivative of $v_i(\xi,\eta)$ with respect to $\xi$.
\end{lemma}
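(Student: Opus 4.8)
The natural move is to pass to the variables $(\xi,\eta)=(t_1,t_2-t_1)$, in which $v_1$ and $v_2$ were defined, and then reduce everything to the chain rule together with the single compatibility relation \eqref{Dmn}. First I would record that, acting on any function of $\xi$ and $\eta$, one has $\partial_{t_1}=\partial_\xi-\partial_\eta$ and $\partial_{t_2}=\partial_\eta$ (because $\eta=t_2-t_1$). Since by definition $\mu_1(t_1,t_2)=-\sqrt{2}\,v_1(t_1,t_2-t_1)$ and $\nu_1(t_1,t_2)=-\sqrt{2}\,v_2(t_1,t_2-t_1)$, the chain rule immediately yields
\[
\partial_{t_1}\mu_1=-\sqrt{2}\,(v_{1\xi}-v_{1\eta}),\qquad \partial_{t_2}\mu_1=-\sqrt{2}\,v_{1\eta},
\]
and the analogous pair with $\nu_1$ and $v_2$ in place of $\mu_1$ and $v_1$, all arguments being $(t_1,t_2-t_1)$.

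Next I would insert these four expressions into the compatibility relation \eqref{Dmn}, namely $\partial_{t_2}\mu_1=\partial_{t_1}\nu_1$; after dividing by $-\sqrt{2}$ this collapses to the single identity $v_{1\eta}=v_{2\xi}-v_{2\eta}$. Now adding the expressions for $\partial_{t_1}\mu_1$ and $\partial_{t_1}\nu_1$ gives $\partial_{t_1}(\mu_1+\nu_1)=-\sqrt{2}\,(v_{1\xi}-v_{1\eta}+v_{2\xi}-v_{2\eta})$; substituting $v_{2\xi}-v_{2\eta}=v_{1\eta}$ from the identity above makes the $\eta$-derivatives cancel and leaves $\partial_{t_1}(\mu_1+\nu_1)=-\sqrt{2}\,v_{1\xi}$, which is \eqref{Dm1}. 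Similarly, adding the expressions for $\partial_{t_2}\mu_1$ and $\partial_{t_2}\nu_1$ gives $\partial_{t_2}(\mu_1+\nu_1)=-\sqrt{2}\,(v_{1\eta}+v_{2\eta})$, and the same identity rewrites $v_{1\eta}+v_{2\eta}=v_{2\xi}$, producing $\partial_{t_2}(\mu_1+\nu_1)=-\sqrt{2}\,v_{2\xi}$, which is \eqref{Dn1}. Together these establish \eqref{Dmnu}.

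There is no genuine obstacle in this lemma: the content is a bookkeeping exercise with the chain rule, and the only structural ingredient is the already-proved relation \eqref{Dmn}. The one point that warrants care is the orientation of the change of variables — that $\eta=t_2-t_1$, so that $\partial_{t_1}$ acquires the $-\partial_\eta$ term — since a sign error there would destroy the cancellation that produces the clean right-hand sides $-\sqrt{2}\,v_{1\xi}$ and $-\sqrt{2}\,v_{2\xi}$.
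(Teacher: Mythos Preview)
Your proof is correct and follows essentially the same approach as the paper: both arguments amount to the chain rule in the variables $(\xi,\eta)=(t_1,t_2-t_1)$ combined with the compatibility relation \eqref{Dmn}. The only cosmetic difference is that the paper first observes $(\partial_{t_1}+\partial_{t_2})\mu_1=-\sqrt{2}\,v_{1\xi}$ and then uses \eqref{Dmn} in the $(t_1,t_2)$ variables to swap $\partial_{t_2}\mu_1$ for $\partial_{t_1}\nu_1$, whereas you translate \eqref{Dmn} into the $(\xi,\eta)$ variables and cancel there; the content is identical.
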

\begin{proof}

By the definition of $v_1(t_1,t_2-t_1)$, we find
\begin{align*}
\partial_{t_1}\mu_1(t_1,t_2)=&-\sqrt{2}\cdot\partial_{t_1}v_1(t_1,t_2-t_1)\\
=&-\sqrt{2}\left(v_{1\xi}(t_1,t_2-t_1)-v_{1\eta}(t_1,t_2-t_1)\right),\\
\partial_{t_2}\mu_1(t_1,t_2)=&-\sqrt{2}\cdot\partial_{t_2}v_1(t_1,t_2-t_1)\\
=&-\sqrt{2}\cdot v_{1\eta}(t_1,t_2-t_1),
\end{align*}
so that
\begin{align*}
\left(\partial_{t_1}+\partial_{t_2}\right)\mu_1(t_1,t_2)=-\sqrt{2}v_{1\xi}(t_1,t_2-t_1).
\end{align*}
In view of \eqref{Dmn}, we obtain \eqref{Dm1}.
Via a similar argument, we can prove \eqref{Dn1}.
\end{proof}

With the aid of \eqref{Dmn} and \eqref{Dmnu}, we establish the following equations for $v_1$ and $v_2$ by using the coupled PDEs \eqref{mnuPDE}.

\begin{theorem}\label{vPDE}
The quantities $v_1(t_1,t_2-t_1)$ and $v_2(t_1,t_2-t_1)$ satisfy a coupled nonlinear equations
\begin{align}\label{couplePii}
v_{i\xi\xi}-\frac{v_{i\xi}^2}{2v_i}-2v_i(2(v_1+v_2)+t_i)=0,
\end{align}
where $v_{i\xi}$ and $v_{i\xi\xi}$ denote the first and second order derivative of $v_i(\xi,\eta)$ with respect to $\xi$ respectively.
\end{theorem}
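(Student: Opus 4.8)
The plan is to derive \eqref{couplePii} by substituting the differential relations of the preceding lemma into the coupled PDEs \eqref{mnuPDE} and changing variables from $(t_1,t_2)$ to $(\xi,\eta)=(t_1,t_2-t_1)$. First I would rewrite \eqref{mnuPDE1}. Since $v_1(t_1,t_2-t_1)=-\mu_1/\sqrt{2}$ and $v_2(t_1,t_2-t_1)=-\nu_1/\sqrt{2}$, we have $\mu_1+\nu_1=-\sqrt{2}(v_1+v_2)$, and by \eqref{Dm1} the combination $(\partial_{t_1}+\partial_{t_2})(\mu_1+\nu_1)$ equals $-\sqrt{2}v_{1\xi}$. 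The key observation is that the differential operator $\partial_{t_1t_1}^2+\partial_{t_1t_2}^2=\partial_{t_1}(\partial_{t_1}+\partial_{t_2})$ acting on $\mu_1+\nu_1$ therefore equals $\partial_{t_1}\bigl(-\sqrt{2}v_{1\xi}(t_1,t_2-t_1)\bigr)=-\sqrt{2}(v_{1\xi\xi}-v_{1\xi\eta})$, while the analogous expression with the roles shifted will be needed for the second equation. At the same time the term $\bigl(\partial_{t_1}(\mu_1+\nu_1)\bigr)^2/(2\mu_1)$ becomes $\bigl(\sqrt{2}v_{1\xi}\bigr)^2/(-2\sqrt{2}v_1)=-v_{1\xi}^2/(\sqrt{2}v_1)\cdot\tfrac{1}{\sqrt 2}$, i.e. a clean multiple of $v_{1\xi}^2/v_1$, and the term $2\mu_1(\sqrt{2}(\mu_1+\nu_1)-t_1)$ becomes $-2\sqrt{2}v_1(-2(v_1+v_2)-t_1)=2\sqrt{2}v_1(2(v_1+v_2)+t_1)$.

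The one subtlety is that $\partial_{t_1}(\mu_1+\nu_1)$ as it appears inside \eqref{mnuPDE1} is the pure $t_1$-derivative, not the combination $(\partial_{t_1}+\partial_{t_2})(\mu_1+\nu_1)$, so before applying \eqref{Dm1} I must account for the $\eta$-derivative pieces. Here is where \eqref{Dmn}, i.e. $\partial_{t_2}\mu_1=\partial_{t_1}\nu_1$, does the work: writing everything in terms of $\xi$- and $\eta$-derivatives of $v_1,v_2$, the relation \eqref{Dmn} forces $v_{1\eta}=v_{2\xi}$ (this is essentially the content of the proof of the lemma, where $\partial_{t_2}\mu_1=-\sqrt 2 v_{1\eta}$ and $\partial_{t_1}\nu_1=-\sqrt{2}(v_{2\xi}-v_{2\eta})$ combine with $\partial_{t_1}\nu_1+\partial_{t_2}\nu_1=-\sqrt 2 v_{2\xi}$). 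Using $v_{1\eta}=v_{2\xi}$, the cross-derivative terms $v_{1\xi\eta}$ and $v_{2\xi\xi}$ that arise from expanding $\partial_{t_1t_1}^2(\mu_1+\nu_1)$ and $\partial_{t_1t_2}^2(\mu_1+\nu_1)$ separately should cancel pairwise when summed, leaving only $v_{1\xi\xi}$. Symmetrically, for \eqref{mnuPDE2} one uses \eqref{Dn1} and the same relation $v_{1\eta}=v_{2\xi}$ to collapse the second-order operator $\partial_{t_2t_2}^2+\partial_{t_1t_2}^2=\partial_{t_2}(\partial_{t_1}+\partial_{t_2})$ acting on $\mu_1+\nu_1$ down to $-\sqrt 2 v_{2\xi\xi}$ (the $v_{2\eta\eta}$ and $v_{1\xi\eta}$ pieces cancelling), and the remaining terms become a multiple of $v_{2\xi}^2/v_2$ and $v_2(2(v_1+v_2)+t_2)$.

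Once both PDEs have been rewritten, dividing \eqref{mnuPDE1} through by the common nonzero factor $-\sqrt{2}$ yields exactly \eqref{couplePii} with $i=1$, and \eqref{mnuPDE2} yields the $i=2$ case; the unified form $v_{i\xi\xi}-v_{i\xi}^2/(2v_i)-2v_i(2(v_1+v_2)+t_i)=0$ with $t_1=\xi$, $t_2=\xi+\eta$ then follows by the evident symmetry between the two equations under interchange of indices. I expect the main obstacle to be purely bookkeeping: carefully tracking which occurrences of $\partial_{t_1}$ and $\partial_{t_2}$ are ``pure'' versus ``combined'' when passing to the $(\xi,\eta)$ variables, and verifying that every mixed second-order term ($v_{1\xi\eta}$, $v_{2\xi\xi}$, $v_{1\eta\eta}$, etc.) either cancels or is absorbed via $v_{1\eta}=v_{2\xi}$. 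No new idea beyond \eqref{Dmn}, \eqref{Dmnu} and a change of variables is required; the risk is a sign or coefficient slip in the chain-rule expansions, which I would guard against by checking the reduction against the single-jump limit (where $v_2\equiv 0$ and \eqref{couplePii} should collapse to the known Painlev\'e II equation of \cite{MinChen19}).
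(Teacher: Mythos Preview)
Your overall strategy---substitute \eqref{Dmnu} into \eqref{mnuPDE} and pass to the $(\xi,\eta)$ variables---is exactly what the paper does. But you have misread the lemma, and this misreading is the source of the ``subtlety'' you flag and of the auxiliary relation you try to invoke.

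Relation \eqref{Dm1} says $\partial_{t_1}(\mu_1+\nu_1)=-\sqrt{2}\,v_{1\xi}$, not $(\partial_{t_1}+\partial_{t_2})(\mu_1+\nu_1)=-\sqrt{2}\,v_{1\xi}$. (The latter is false: adding \eqref{Dm1} and \eqref{Dn1} gives $-\sqrt{2}(v_{1\xi}+v_{2\xi})$.) Once you read \eqref{Dm1} correctly, the first-order term $\bigl(\partial_{t_1}(\mu_1+\nu_1)\bigr)^2/(2\mu_1)$ in \eqref{mnuPDE1} is handled \emph{directly} by \eqref{Dm1}---there is no ``pure vs.\ combined'' issue to resolve. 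For the second-order term, factor the operator in the order $(\partial_{t_1}+\partial_{t_2})\circ\partial_{t_1}$ (the operators commute): apply $\partial_{t_1}$ first via \eqref{Dm1} to get $-\sqrt{2}\,v_{1\xi}(t_1,t_2-t_1)$, then apply $\partial_{t_1}+\partial_{t_2}$, which in the $(\xi,\eta)$ coordinates is exactly $\partial_\xi$. This yields $-\sqrt{2}\,v_{1\xi\xi}$ immediately, with no cross-terms to cancel. The paper does precisely this (computing the $t_1$- and $t_2$-derivatives of \eqref{Dm1} separately and adding), and the substitution into \eqref{mnuPDE1} then gives \eqref{couplePii} for $i=1$ after dividing by $-\sqrt{2}$; the $i=2$ case is symmetric via \eqref{Dn1}.

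A second slip: the relation $v_{1\eta}=v_{2\xi}$ you derive from \eqref{Dmn} is incorrect. From $\partial_{t_2}\mu_1=-\sqrt{2}\,v_{1\eta}$ and $\partial_{t_1}\nu_1=-\sqrt{2}(v_{2\xi}-v_{2\eta})$, identity \eqref{Dmn} gives $v_{1\eta}=v_{2\xi}-v_{2\eta}$, not $v_{1\eta}=v_{2\xi}$. Fortunately this relation is never needed once \eqref{Dm1} is read correctly.
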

\begin{proof}

Differentiation of both sides of \eqref{Dm1} over $t_1$ and $t_2$ gives us
\begin{align*}
\partial_{t_1t_1}^2\left(\mu_1(t_1,t_2)+\nu_1(t_1,t_2)\right)=&-\sqrt{2}\left(v_{1\xi\xi}(t_1,t_2-t_1)-v_{1\xi\eta}(t_1,t_2-t_1)\right),\\
\partial_{t_1t_2}^2\left(\mu_1(t_1,t_2)+\nu_1(t_1,t_2)\right)=&-\sqrt{2}v_{1\xi\eta}(t_1,t_2-t_1),
\end{align*}
where in the second equality we make use of \eqref{mnt12}. It follows that
\begin{subequations}\label{Dmn2v12}
\begin{align}
\left(\partial_{t_1t_1}^2+\partial_{t_1t_2}^2\right)\left(\mu_1(t_1,t_2)+\nu_1(t_1,t_2)\right)=-\sqrt{2}v_{1\xi\xi}(t_1,t_2-t_1).
\end{align}
Similarly, by differentiating both sides of \eqref{Dn1} over $t_1$ and $t_2$, we get
\begin{align}
\left(\partial_{t_2t_2}^2+\partial_{t_1t_2}^2\right)\left(\mu_1(t_1,t_2)+\nu_1(t_1,t_2)\right)=-\sqrt{2}v_{2\xi\xi}(t_1,t_2-t_1).
\end{align}
\end{subequations}
Plugging \eqref{Dmn2v12} and \eqref{Dmnu} into \eqref{mnuPDE}, we arrive at the desired equations.
\end{proof}

Now we look at $\sigma_n(s_1,s_2)$ which is defined by
\[\sigma_n(s_1,s_2):=\left(\partial_{s_1}+\partial_{s_2}\right)\ln D_n(s_1,s_2).\]
Recall that it is expressed in terms of $R_{n,i}$ and $r_{n,i}$ by \eqref{sigmaRr}. Substituting the expansions of $R_{n,i}$ and $r_{n,i}$ into this expression, we establish the following results.

\begin{theorem}\label{sigmaH}
$\sigma_n(s_1,s_2)$ has the following asymptotic expansion in large $n$
\begin{align}\label{sigmaHii}
\sigma_n(s_1,s_2)=\sqrt{2}n^{1/6}H_{II}(t_1,t_2-t_1)+O(n^{-1/6}),
\end{align}
where $H_{II}(t_1,t_2-t_1)$ is the Hamiltonian of the following coupled Painlev\'{e} II system
\begin{subequations}\label{Pii}
\begin{numcases}{}
v_{i\xi}=\frac{\partial H_{II}}{\partial w_i}=2v_iw_i,\label{Pii1}\\
w_{i\xi}=-\frac{\partial H_{II}}{\partial v_i}=2\left(v_1+v_2\right)+t_i-w_i^2,\label{Pii2}
\end{numcases}
\end{subequations}
which is given by
\begin{align}
H_{II}(t_1,t_2-t_1)=v_1w_1^2+v_2w_2^2-(v_1+v_2)^2-t_1v_1-t_2v_2.
\end{align}
Here $v_i=v_i(t_1,t_2-t_1)$ and $w_i=w_i(t_1,t_2-t_1)$. Moreover, $H_{II}$ satisfies the following second order second degree PDE
\begin{equation}\label{HiiPDE}
\begin{aligned}
\left(\partial_{t_1}H_{II}\right)\cdot\left(\partial_{t_2t_2}^2H_{II}+\partial_{t_2t_1}^2H_{II}\right)^2
+\left(\partial_{t_2}H_{II}\right)\cdot\left(\partial_{t_1t_1}^2 H_{II}+\partial_{t_1t_2}^2H_{II}\right)^2\\
=4\left(\partial_{t_1}H_{II}\right)\left(\partial_{t_2}H_{II}\right)\left(t_1\cdot\partial_{t_1}H_{II}+t_2\cdot\partial_{t_2}H_{II}-H_{II}\right).
\end{aligned}
\end{equation}
\end{theorem}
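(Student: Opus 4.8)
The plan is to verify the three assertions of Theorem~\ref{sigmaH} in turn: the asymptotic expansion~\eqref{sigmaHii}, the identification of the limit as a Hamiltonian for the coupled Painlev\'{e} II system~\eqref{Pii}, and the second order PDE~\eqref{HiiPDE}. For the first step I would substitute the expansions~\eqref{Rnex} and~\eqref{rnex} directly into the closed-form expression~\eqref{sigmaRr} for $\sigma_n$, recalling that $s_i=\sqrt{2n}+t_i/(\sqrt{2}n^{1/6})$. The terms $2(s_1r_{n,1}+s_2r_{n,2})$ and $-(r_{n,1}+r_{n,2}+n)(R_{n,1}+R_{n,2})$ both contribute at order $n^{1/6}$ after the leading $n^{1/2}\cdot n^{1/3}=n^{5/6}$ pieces and the $n^{1/2}$ pieces cancel; collecting the $n^{1/6}$ coefficient and writing it via $v_i=-\mu_i/\sqrt2$ and $v_i=-\nu_i/\sqrt2$ (for $i=1,2$ in the obvious identification), together with the derivative relations~\eqref{Dmnu} to express the $\mu_2,\nu_2$-combinations, should produce exactly $\sqrt2\,H_{II}$ with $H_{II}=v_1w_1^2+v_2w_2^2-(v_1+v_2)^2-t_1v_1-t_2v_2$ once one introduces $w_i$ through $w_i:=v_{i\xi}/(2v_i)$. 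This bookkeeping—tracking which orders cancel and confirming the surviving coefficient—is the main computational obstacle, and it is essential to use~\eqref{rnex}, which already encodes the subleading structure of the $r_{n,i}$.

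For the Hamiltonian structure, I would take~\eqref{Pii1} as the \emph{definition} of $w_i$, namely $w_i:=v_{i\xi}/(2v_i)$, so that $v_{i\xi}=2v_iw_i$ holds tautologically and $\partial H_{II}/\partial w_i=2v_iw_i$ is immediate from the stated $H_{II}$. The content is then~\eqref{Pii2}: one must show $w_{i\xi}=2(v_1+v_2)+t_i-w_i^2$, which upon multiplying by $2v_i$ and using $v_{i\xi}=2v_iw_i$ is equivalent to $v_{i\xi\xi}-v_{i\xi}^2/(2v_i)=2v_i\bigl(2(v_1+v_2)+t_i\bigr)$—precisely the coupled equation~\eqref{couplePii} already established in Theorem~\ref{vPDE}. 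One also checks $-\partial H_{II}/\partial v_i=-w_i^2+2(v_1+v_2)+t_i$, matching the right side of~\eqref{Pii2}; this is a one-line derivative of the explicit $H_{II}$. So this step is short once Theorem~\ref{vPDE} is invoked.

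For the second order PDE~\eqref{HiiPDE}, the cleanest route is to push Proposition~\ref{sigmaPDE} through the double scaling limit rather than to recompute from scratch. Substituting $\sigma_n=\sqrt2\,n^{1/6}H_{II}+O(n^{-1/6})$ and $\partial_{s_i}=\tfrac{1}{\sqrt2}n^{1/6}\partial_{t_i}$ into the PDE of Proposition~\ref{sigmaPDE}, one tracks the leading power of $n$ in each of $\Delta_1,\Delta_2$ and in $2s_1\partial_{s_1}\sigma_n+2s_2\partial_{s_2}\sigma_n-2\sigma_n$: here $s_i\partial_{s_i}\sigma_n\sim\sqrt{2n}\cdot\tfrac{1}{\sqrt2}n^{1/6}\partial_{t_i}(\sqrt2 n^{1/6}H_{II})=\sqrt2\,n^{5/6}\,\tfrac{\sqrt2}{2}\bigl(\cdots\bigr)$, while the $n$-dependent term $4(\partial_{s_i}\sigma_n)^2(\partial_{s_1}\sigma_n+\partial_{s_2}\sigma_n+2n)$ in $\Delta_i$ is dominated by the $2n$ factor, giving $\Delta_i\sim 4\cdot\tfrac12 n^{1/3}(\partial_{t_i}(\sqrt2 n^{1/6}H_{II}))^2\cdot 2n=\cdots$; one keeps the top-order terms and divides through by the common power of $n$. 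The surviving identity should collapse—after using the scaling relation between $(\partial_{t_1}+\partial_{t_2})$ derivatives coming from~\eqref{mnt12}-type compatibility and the fact that $H_{II}$ depends on $(t_1,t_2-t_1)$—to~\eqref{HiiPDE}. The delicate point is confirming that the leading orders in $\bigl((\cdots)^2-\Delta_1-\Delta_2\bigr)^2=4\Delta_1\Delta_2$ balance so that the limiting equation is nontrivial (rather than $0=0$), which is why one must retain the correct subleading terms of $\sigma_n$; alternatively, one can re-derive~\eqref{HiiPDE} intrinsically from the coupled Painlev\'{e} II system~\eqref{Pii} by eliminating $w_1,w_2$ from the Hamilton equations together with $\partial_{t_i}H_{II}=\partial H_{II}/\partial t_i=-v_i$ (the explicit $t_i$-derivative of $H_{II}$), which gives $v_i=-\partial_{t_i}H_{II}$ and then feeds back into~\eqref{couplePii}; that elimination is the natural fallback if the limit of Proposition~\ref{sigmaPDE} proves too cumbersome.
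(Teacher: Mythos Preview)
Your proposal is correct and follows essentially the same route as the paper: substitute \eqref{Rnex} and \eqref{rnex} into \eqref{sigmaRr} to extract the $n^{1/6}$ coefficient, define $w_i:=v_{i\xi}/(2v_i)$ so that \eqref{Pii1} is tautological and \eqref{Pii2} reduces to \eqref{couplePii}, and obtain \eqref{HiiPDE} as the leading-order limit of Proposition~\ref{sigmaPDE}. One small clarification: the $\mu_2,\nu_2$ contributions to the $n^{1/6}$ coefficient of $\sigma_n$ cancel automatically (the $O(1)$ piece of $r_{n,i}$ in \eqref{rnex} pairs $\mu_2/\sqrt{2}$ with the derivative term so that only the combination $\sqrt{2}\rho_i-\kappa_i=\tfrac12\partial_{t_i}(\mu_1+\nu_1)$ survives), and \eqref{Dmnu} is then used merely to rewrite $\partial_{t_i}(\mu_1+\nu_1)$ as $-\sqrt{2}\,v_{i\xi}$, not to eliminate $\mu_2,\nu_2$.
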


\begin{proof}
Recall \eqref{sigmaRr}, i.e.
\begin{align*}
\sigma_n(s_1,s_2)=2\left(s_1r_{n,1}+s_2r_{n,2}-\frac{r_{n,1}^2}{R_{n,1}}-\frac{r_{n,2}^2}{R_{n,2}}\right)
-\left(r_{n,1}+r_{n,2}+n\right)\left(R_{n,1}+R_{n,2}\right).
\end{align*}
Substituting \eqref{Rnex} and \eqref{rnex} into the right hand side of this expression, by taking its series expansion in large $n$, we obtain \begin{align*}
\sigma_n(s_1,s_2)=\left(-\frac{\left(\partial_{t_1}\left(\mu_1+\nu_1\right)\right)^2}{4\mu_1}-\frac{\left(\partial_{t_2}\left(\mu_1+\nu_1\right)\right)^2}{4\nu_1}-\frac{\left(\mu_1+\nu_1\right)^2}{\sqrt{2}}+t_1\mu_1+t_2\nu_1\right)n^{1/6}+O\left(n^{-1/6}\right).
\end{align*}
Replacing the derivative terms in the coefficient of $n^{1/6}$ by using \eqref{Dmnu}, and substituting $-\sqrt{2}v_1$ and $-\sqrt{2}v_2$  for $\mu_1$ and $\nu_1$ respectively, we find
\[\sigma_n(s_1,s_2)=\sqrt{2}n^{1/6}\left(\frac{v_{1\xi}^2}{4v_1}+\frac{v_{2\xi}^2}{4v_2}-\left(v_1+v_2\right)^2-t_1v_1-t_2v_2\right)+O\left(n^{-1/6}\right).\]
On writing
\[w_i(t_1,t_2-t_1):=\frac{v_{i\xi}(t_1,t_2-t_1)}{2v_i(t_1,t_2-t_1)},\]
we get \eqref{sigmaHii}.

From the above definition of $w_i$, we readily see that \eqref{Pii1} holds. Taking the derivative on both sides of \eqref{Pii1}, we are led to
\[v_{i\xi\xi}=4v_iw_i^2+2v_iw_{i\xi}.\]
Inserting it and \eqref{Pii1} into \eqref{couplePii}, after simplification, we produce \eqref{Pii2}.

To derive \eqref{HiiPDE}, we plugging \eqref{sigmaHii} into the PDE satisfied by $\sigma_n$, i.e. \eqref{sigmaPDE}. By taking the series expansion of its left hand side and setting the leading coefficient to be zero, we obtain \eqref{HiiPDE}.
\end{proof}

Recall \eqref{alR} and \eqref{betar} which express the recurrence coefficients in terms of $R_{n,i}$ and $r_{n,i}$, namely,
\begin{align*}
\alpha_n=&\frac{1}{2}\left(R_{n,1}+R_{n,2}\right),\\
\beta_n=&\frac{1}{2}\left(r_{n,1}+r_{n,2}+n\right).
\end{align*}
Substituting \eqref{Rnex} and \eqref{rnex} into the above expressions, after simplification, we get the asymptotic expansions of $\alpha_n$ and $\beta_n$ in large $n$.
\begin{theorem}\label{abex}
The recurrence coefficients of the monic polynomials orthogonal with respect to the Gaussian weight with two jump discontinuities have the following asymptotics for large $n$
\begin{align*}
\alpha_n(s_1,s_2)=&-\frac{v_1+v_2}{\sqrt{2}n^{1/6}}+O(n^{-1/2}),\\
\beta_n(s_1,s_2)=&\frac{n}{2}-\frac{v_1+v_2}{2}n^{1/3}+O(1).
\end{align*}
Here $v_i=v_i(t_1,t_2-t_1), i=1,2,$ satisfy the coupled Painlev\'{e} II system \eqref{Pii}.
\end{theorem}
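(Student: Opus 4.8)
The plan is to derive the two asymptotic formulas directly from the exact relations \eqref{alR} and \eqref{betar} together with the already-established expansions \eqref{Rnex} and \eqref{rnex}. First I would recall that \eqref{alR} reads $\alpha_n=\tfrac12(R_{n,1}+R_{n,2})$. Substituting \eqref{Rn1ex} and \eqref{Rn2ex} gives
\[
\alpha_n=\frac12\bigl((\mu_1+\nu_1)\,n^{-1/6}+(\mu_2+\nu_2)\,n^{-1/2}+O(n^{-5/6})\bigr),
\]
and since the theorem of the previous subsection identifies $\mu_1=\lim_{n\to\infty}n^{1/6}R_{n,1}$ and $\nu_1$ likewise, using the definitions $v_1=-\mu_1/\sqrt2$, $v_2=-\nu_1/\sqrt2$ introduced before Lemma (the one preceding \eqref{Dmnu}) we obtain $\mu_1+\nu_1=-\sqrt2\,(v_1+v_2)$, whence the leading term is $-\tfrac{1}{\sqrt2}(v_1+v_2)\,n^{-1/6}$ and the remainder is $O(n^{-1/2})$. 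This is the first formula.

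Next I would treat $\beta_n$. By \eqref{betar}, $\beta_n=\tfrac12(r_{n,1}+r_{n,2}+n)$. From the expansions \eqref{rnex} we have
\[
r_{n,1}+r_{n,2}=\frac{\mu_1+\nu_1}{\sqrt2}\,n^{1/3}+\frac{\mu_2+\nu_2}{\sqrt2}+\frac{\sqrt2}{4}\bigl(\partial_{t_1}+\partial_{t_2}\bigr)(\mu_1+\nu_1)+O(n^{-1/3}).
\]
Therefore $\beta_n=\tfrac{n}{2}+\tfrac{\mu_1+\nu_1}{2\sqrt2}\,n^{1/3}+O(1)$, and substituting $\mu_1+\nu_1=-\sqrt2(v_1+v_2)$ yields $\beta_n=\tfrac n2-\tfrac{v_1+v_2}{2}\,n^{1/3}+O(1)$, which is the second formula. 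Finally, the claim that $v_1,v_2$ satisfy the coupled Painlev\'e II system \eqref{Pii} is not something to be reproved here: it is exactly the content of Theorem \ref{sigmaH} (together with Theorem \ref{vPDE}), so I would simply cite those results to close the statement.

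The calculations are all routine once the expansions are in hand, so there is no genuine obstacle; the only point demanding a little care is bookkeeping the orders of the omitted terms. One must check that no term of order $n^{-1/3}$ (for $\alpha_n$, between $n^{-1/6}$ and $n^{-1/2}$) or of order $n^{2/3}$ (for $\beta_n$, between $n$ and $n^{1/3}$) is generated — i.e. that the half-integer powers of $n^{1/6}$ appearing in \eqref{Rnex} and \eqref{rnex} really do skip those intermediate orders. This follows because \eqref{Rn1ex}–\eqref{Rn2ex} are expansions in $n^{(1-2i)/6}=n^{1/6},n^{-1/6},n^{-1/2},\dots$ with step $n^{-1/3}$, and the factor-of-$n^{1/3}$ shift in \eqref{rnex} preserves this stepping; hence the error terms are precisely $O(n^{-1/2})$ and $O(1)$ as stated. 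I would present the argument in two short displays (one for $\alpha_n$, one for $\beta_n$) followed by the sentence invoking Theorem \ref{sigmaH}.
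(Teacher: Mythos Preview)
Your proposal is correct and follows essentially the same approach as the paper: both arguments substitute the large-$n$ expansions \eqref{Rnex} and \eqref{rnex} into the exact relations \eqref{alR} and \eqref{betar}, then rewrite $\mu_1+\nu_1=-\sqrt{2}(v_1+v_2)$ via the definitions preceding \eqref{Dmnu}. Your additional paragraph checking that no intermediate powers of $n$ intrude is more explicit than the paper (which simply says ``after simplification''), but it is harmless and the bookkeeping is right.
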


\begin{remark}
Theorem \ref{vPDE}, \ref{sigmaH} and \ref{abex} are consistent with equation (1.33), Lemma 1 and Theorem 4 of \cite{WuXu20} respectively.
\end{remark}

\section*{Acknowledgments}
Shulin Lyu was supported by National Natural Science Foundation of China under grant number 11971492. Yang Chen was supported by the Macau Science and Technology Development Fund under grant number FDCT 023/2017/A1 and by the University of Macau under grant number MYRG 2018-00125-FST.

\end{document}